\let\mylistof\listof
\renewcommand\listof[2]{\mylistof{algorithm}{Liste des algorithmes}}
\providecommand*{\toclevel@algorithm}{0}
\newtheorem{theoreme}{Theorem}
\newtheorem{lemma}{Lemma}
\newtheorem{observation}{Observation}
\newtheorem{proposition}{Proposition}
\newtheorem{conjecture}{Conjecture}
\newtheorem{question}{Open question}
\theoremstyle{remark}
\newtheorem{remark}{Remark}
\newcommand{\indet}{\fbox{$\scriptstyle{?}$}}
\newcommand{\plus}{\fbox{$\scriptstyle{+}$}}
\newcommand{\moins}{\fbox{$\scriptstyle{-}$}}
\newcommand{\spm}{\fbox{$\scriptstyle{\pm}$}}
\def\P{\mathcal{P}}
\def\E{\mathcal{E}}
\def\B{\mathcal{B}}
\def\C{\mathcal{C}}
\def\det{det}
\def\R{\mathbb{R}}
\title{Orientations of Simplices\\Determined by Orderings 
on the Coordinates of their Vertices%
%
\thanks{A short preliminary conference version of this paper has been published \cite{cccg}.}
\thanks{This research is part of the OMSMO Project
 (Oriented Matroids for Shape Modeling), 
supported by the ``Chercheur d'avenir'' Languedoc-Roussillon  Grant  and the ``Fonds europ\'een de d\'eveloppement r\'egional'' FEDER. Supported formerly by  the TEOMATRO 
Grant ANR-10-BLAN 0207.}
}
\author{

Emeric Gioan\footnotemark[1] \and 

Kevin Sol\footnotemark[2]
\and 

G\'erard Subsol\footnotemark[1]
}
\font\eightrm=cmr10 scaled 800
\date{\eightrm \today}
\begin{document}

\maketitle


\footnotetext[1]{CNRS, LIRMM, Universit\'e de Montpellier, France. {\it Email:} {\tt \{lastname\}@lirmm.fr}}
\footnotetext[2]{This research was done when Kevin Sol was Ph.D. at the LIRMM (research teams AlGCo and ICAR), Universit\'e de Montpellier, France}

\vspace{-0.5cm}

\def\corr#1{ \textcolor{blue}{\tiny [#1]}} 

\def\corp#1{ \textcolor{purple}{\tiny [#1]}} 

\def\corv#1{ \textcolor{brown}{\tiny [#1]}} 

\def\cora#1{ \textcolor{red}{\tiny [#1]}} 

\def\AFINIR{\textcolor{red}{******}}

\begin{abstract}
Provided $n$ points in an $(n-1)$-dimensional affine space, and one ordering of the points for each coordinate,
we address the problem of testing whether these orderings determine if the points are the vertices of a simplex (i.e. are affinely independent),  regardless of the real values of the coordinates. 
We also attempt to determine the orientation of this simplex.
  In other words, given a matrix whose columns correspond to affine points, we want to know when the sign (or the non-nullity) of its determinant is implied by orderings given to each row for the values of the row.
We completely solve the problem in dimensions 2 and 3.
We provide a direct combinatorial characterization, along with 
a formal calculus method. It can also be viewed as a decision algorithm, and is based on testing the existence of a suitable inductive cofactor expansion  of the determinant. We conjecture that our
method generalizes in higher dimensions.
This work aims
to be part of a study on how oriented matroids encode shapes of 3-dimensional landmark-based objects. Specifically, applications include the analysis of anatomical data for physical anthropology and clinical research. 
\smallskip

\it
Keywords: simplex orientation, determinant sign, chirotope, coordinate ordering, combinatorial algorithm, formal calculus, 
oriented matroid, 3D model,  3D landmark-based morphometry. 

\smallskip
AMS classification: 15A03, 15A15, 15B35, 52C40
\end{abstract}


\section{Introduction}
\label{sec:intro}

We consider $n$ points in an $(n-1)$-dimensional real affine space. 
For each of the $n-1$ coordinates, an ordering is given and applied to the $n$ values of the points with respect to this coordinate.
We address the problem of testing if these points are the vertices of a simplex (i.e. are affinely independent, i.e. do not belong to a same hyperplane), and of determining the orientation of this simplex, 
assuming that the coordinates of the points satisfy the given orderings,
 independently of their real values.

More formally, we consider the following generic matrix
(where each $e_i$ is the label of a point, forming the set $\E$, and each $b_i$ is the index of a coordinate, forming the set $\B$)
\begin{displaymath}
M_{\E,\B}=\left(\begin{array}{c c c c}
1&1&\ldots&1\\
x_{e_1,b_1}&x_{e_2,b_1}&\ldots&x_{e_n,b_1}\\
x_{e_1,b_2}&x_{e_2,b_2}&\ldots&x_{e_n,b_2}\\
\vdots&\vdots& &\vdots\\
x_{e_1,b_{n-1}}&x_{e_2,b_{n-1}}&\ldots&x_{e_n,b_{n-1}}
\end{array}\right)
\end{displaymath}
together with orderings given to the values for each row; we want to know when the sign (or the non-nullity) of the matrix determinant is determined by these orderings only.

Equivalently, we consider the above formal matrix and the affine algebraic variety of $\R^{n\times (n-1)}$ whose equation is $det(M_{\E,\B})=0$.
Then,
among (open) regions of $\R^{n\times (n-1)}$ delimited by the hyperplanes 
$x_{e_i,b_k}=x_{e_j,b_k}$ 
for all $1\leq i,j\leq n$ and all $1\leq k\leq n-1$, 
we attempt to identify those having a non-empty intersection with this variety (obviously, regions delimited by these hyperplanes are in canonical bijection with coordinate linear orderings).

 We completely solve the problem in dimensions 2 (Section \ref{sec:dim-2}) and 3 (Section \ref{sec:dim-3}).
We provide a direct combinatorial characterization to test if the orientation is determined or not, along with a combinatorial formal calculus method which can also be viewed as a decision algorithm.
More precisely, our method is based on testing the existence of a suitable inductive cofactor expansion of the determinant, 
which allows the computation of the determinant sign using a combinatorial formal calculus. 
We conjecture that this formal characterization
generalizes in higher dimensions (Section \ref{sec:tools}). 

Interestingly, the problem addressed here is formally close to the classical problem of 
sign nonsingular matrices (SNS), see \cite{SNS}. However, the two problems are rather separate. Let us explain this briefly.
The common feature of the two problems relies in the following situation.
Consider a square $n\times n$ matrix $N$ whose entries are signs in $\{+, -\}$.
 From our setting, such an $n\times n$ matrix $N$ can be obtained naturally from $M_{\E,\B}$ and a linear ordering for each row by subtracting a column - say $e_i$ - from every other column, and replacing entries with their signs w.r.t. the linear orderings. Thus, reciprocally,  the sign data in such a matrix $N$ is  interpreted in our setting as an ordering relation for each row
of type: ( the set of $-$ ) $<$ ( the set of $+$ ), corresponding to:
 ( a set $A\subset \E$ ) $<$ $e_i$ $<$ ( a set $B\subset \E$ ).
%
The question is:
assuming real values are assigned to entries of the matrix $N$, such that these values have the same signs as the signs in the matrix,
is this matrix always invertible?
To this particular question, the answer is always NO, whatever the signs, unless $n\leq 2$ (see \cite[page 108]{SNS}: an SNS-matrix of order $n\geq 3$ has at least one zero entry; see also Remark \ref{rk:SNS}). However, in more general settings, the answer can be YES.
The SNS setting and ours consist in two different variants of the above question.
They both yield a non-trivial question and provide interesting classes of sign patterns.
In the SNS setting, the variant is to consider the same question with signs in $\{+, - , 0\}$ instead of $\{+, -\}$.
%
In our setting,
we specify the question keeping signs  in $\{+,-\}$ while restricting
the available real values to values satisfying more involved ordering relations for each row
(between all elements and not only between two subsets $A$ and $B$).
There seems to be no obvious connection between the two problems.
Indeed, the zeros in the SNS setting
and the linear orderings in ours place significantly different constraints upon the sets of real values to be tested.

Finally, we point out that this work aims  to be part of a general study on how oriented matroids \cite{OM99} encode shapes of 3-dimensional landmark-based objects. 
Specifically, applications include the analysis of anatomical data for physical anthropology and clinical research \cite{miccai}\cite{poster}. 
%
%
In these applications, we usually study  a set of models belonging to  a  given group (e.g. sets  of 3D landmark points located on human or primate skulls) and we search for the significant properties encoded by the combinatorial structure.
Our proposed solution allows us to distinguish chirotopes (i.e. simplex orientations) which are determined by the model's ``generic'' form  (e.g. in any skull, the mouth is below the eyes) from those which are specific to anatomical variations.
Examples of 3D anatomical data results are presented in Sections~\ref{sec:example} and \ref{sec:ex-suite}.


\section{Preliminaries}

We warn the reader that we purposely use rather abstract formalism throughout the paper (formal variables instead of real values, indices within arbitrary ordered sets instead of integers). This will allow us to get simpler and non-ambiguous constructions and definitions.

\subsection{Formalism and terminology of the problem}

Let us fix an (ordered) set $\E=\{e_1,\dots,e_n\}$, with size $n$, of \emph{labels}, and an (ordered) canonical basis $\B=\{b_1,\dots,b_{n-1}\}$, with size $n-1$, of the $(n-1)$-dimensional real space $\R^{n-1}$.
We denote $M_{\E,\B}$ - or $M$ for short when the context is clear - the formal matrix 
\begin{displaymath}
M_{\E,\B}=\left(\begin{array}{c c c c}
1&1&\ldots&1\\
x_{e_1,b_1}&x_{e_2,b_1}&\ldots&x_{e_n,b_1}\\
x_{e_1,b_2}&x_{e_2,b_2}&\ldots&x_{e_n,b_2}\\
\vdots&\vdots& &\vdots\\
x_{e_1,b_{n-1}}&x_{e_2,b_{n-1}}&\ldots&x_{e_n,b_{n-1}}
\end{array}\right)
\end{displaymath}
whose entry in column $i$ and row $j+1$, for $1\leq i\leq n$ and $1\leq j\leq n-1$, is the formal variable $x_{e_i,b_j}$.
%
The determinant $det(M_{\E,\B})$ of this formal matrix is a multivariate polynomial in these formal variables, and the main object studied in this paper.

Let $\mathcal{P}$ be a set of $n$ points, labeled by $\E$,  in $\R^{n-1}$ considered as an affine space. 
We denote $M_{\E,\B}(\P)$ - or $M(\P)$ for short - the matrix whose columns give the coordinates of points in $\P$ w.r.t. the basis $\B$. This comes down to specifying real values for the formal variables $x_{e_i,b_j}$ in the matrix $M_{\E,\B}$ above.
For $e\in\E$ and $b\in\B$, we denote $x_{e,b}(\P)$ the real value given to the formal variable  $x_{e,b}$ in $\P$.
We may sometimes denote $x_{e,b}$ for short instead of $x_{e,b}(\P)$ when the context is clear.
%
We call \emph{orientation} of $\mathcal{P}$, or \emph{chirotope} of $\mathcal{P}$ in the oriented matroid terminology,  the sign of $\det(M({\mathcal{P}}))$, belonging to the set $\{+,-,0\}$. It is the sign of the real evaluation of the polynomial $det(M)$ at the real values given by $\P$. This sign is not equal to zero if and only if $\mathcal{P}$ forms a \emph{simplex} (basis of the affine space).
%

%
%

We call  \emph{ordering configuration on $(\E,\B)$} - or \emph{configuration} for short - a list $\C$ of $n-1$ orderings $<_{b_1}, \ldots, <_{b_{n-1}}$ on $\E$, with one ordering for each element of $\B$.
%
In general, such an ordering can be any partial ordering.
If every ordering $<_b$, $b\in \B$, is linear, then $\C$ is called 
 a \emph{linear ordering configuration}.
An element of $\E$ which is the smallest or the greatest in a linear ordering on $\E$ is called \emph{extreme} in this ordering.
We call \emph{reversion} of an ordering the  ordering obtained by reversing every inequality in this ordering.

Given a configuration $\C$ on $(\E,\B)$ and a set of $n$ points $\P$ labeled by $\E$, we say that $\P$ \emph{satisfies} ${\C}$ if, for all $b \in \B$, the natural order (in the set of real numbers $\mathbb{R}$) of the coordinates $b$ of the points in $\mathcal{P}$ is compatible with the ordering $<_b$ of ${\C}$, that is 
precisely : $$\forall b \in \B, \ \ \forall e, f \in \E,\ \ e <_b f \ \Rightarrow \ x_{e,b}(\P) < x_{f,b}(\P).$$
%
%
One may observe that the set of all $\P$ satisfying $\C$ forms a 
convex polyhedron, or more precisely: a (full dimensional) region of the space $\mathbb{R}^{n\times (n-1)}$, delimited by hyperplanes of equations of type $x_{e,b} = x_{f,b}$ for $b \in \B$ and $e, f \in \E$.

We say that a configuration ${\C}$ is \emph{fixed} if all the sets of points $\mathcal{P}$ satisfying ${\C}$ form a simplex and have the same orientation. 
In this case, the sign of $det(M(\P))$ is the same for all $\P$ satisfying $\C$.
Then we call \emph{sign of $det(M)$} 
this sign, belonging to $\{\plus,\moins\}$ accordingly,
and we denote it $\sigma_\C(det(M))$.
If ${\C}$ is \emph{non-fixed}, then its \emph{sign} is
$\sigma_\C(det(M))=\spm$.


\begin{lemma}
\label{lem:non-fixed}
The following propositions are equivalent:

(a) The configuration $\C$ is non-fixed, that is $\sigma_\C(det(M))=\spm$. 

(b) There exist two sets of points $\mathcal{P}_1$ and $\mathcal{P}_2$ satisfying ${\C}$ and forming simplices that do not have the same orientation, that is $det(M(\P_1))>0$ and $det(M(\P_2))<0$;

(c) There exists a set of points $\mathcal{P}$  satisfying ${\C}$ and such that  the points of $\mathcal{P}$ belong to one hyperplane, that is $det(M(\P))=0$.
\end{lemma}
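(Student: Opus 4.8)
The plan is to reduce the three-way equivalence to a single equivalence and then to handle the two surviving directions by a convexity/continuity argument and a local perturbation argument, respectively. Two structural facts drive everything, so I would record them first. Let $S\subseteq\R^{n\times(n-1)}$ denote the set of all $\P$ satisfying $\C$. As already observed, $S$ is an open, full-dimensional region cut out by the strict inequalities $x_{e,b}<x_{f,b}$, so $S$ is in particular convex and connected. Moreover $\det(M)$ is multilinear in the entries of $M$ (each variable $x_{e,b}$ occupies a single cell), hence of degree at most $1$ in each variable; and it is not the zero polynomial, since a simplex exists, so its restriction to the open set $S$ is not identically zero. Write $D:=\det(M)$ for this restriction, a continuous function on $S$.

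Next I would dispatch the bookkeeping. By definition $\C$ is fixed exactly when every $\P\in S$ is a simplex and all of them share one orientation; so (a), its negation, asserts precisely that some $\P\in S$ has $D(\P)=0$, or that two configurations in $S$ receive opposite signs. This is exactly ``(c) or (b)'', and the implications (b)$\Rightarrow$(a) and (c)$\Rightarrow$(a) are then immediate from the definitions. Hence the lemma reduces to proving the equivalence (b)$\Leftrightarrow$(c).

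For (b)$\Rightarrow$(c) I would invoke convexity together with the intermediate value theorem: given $\P_1,\P_2\in S$ with $D(\P_1)>0>D(\P_2)$, the whole segment $[\P_1,\P_2]$ lies in $S$ by convexity, and $D$ restricted to this segment is a continuous function changing sign, so it vanishes at an interior point $\P$, which then satisfies $\C$ and has $D(\P)=0$.

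The implication (c)$\Rightarrow$(b) is the real content and the step I expect to be the main obstacle. The naive idea, perturbing a degenerate $\P_0$ to either side of the variety $\{D=0\}$, works at once when $\nabla D(\P_0)\neq 0$ (equivalently, when the degeneracy of $\P_0$ has codimension one): perturbing along $\pm\nabla D(\P_0)$ stays in the open set $S$ and makes $D$ positive on one side and negative on the other. The difficulty is a possibly \emph{highly} degenerate $\P_0$, where the gradient and even higher derivatives vanish; there one must exclude the ``sum of squares'' behaviour (as in $x\mapsto x^2$) in which $D$ touches $0$ without changing sign. Here multilinearity is decisive. Translating $\P_0$ to the origin, the shifted polynomial $\tilde D(u)=D(\P_0+u)$ is again of degree at most $1$ in each variable, has no constant term, and is not identically zero; hence its lowest-degree homogeneous part $\tilde D_{d_0}$ is a nonzero multilinear form with $d_0\geq 1$. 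The elementary key fact is that a nonzero multilinear form takes both signs arbitrarily close to the origin: fixing a monomial $c_{T_0}\prod_{i\in T_0}u_i$ occurring in it, the two directions ``$u_i=\epsilon$ for $i\in T_0$'' and the same with the sign of one coordinate flipped isolate this monomial (all other degree-$d_0$ monomials vanish on these lines) and yield the values $\pm c_{T_0}\epsilon^{d_0}$. For small $\epsilon>0$ the leading term dominates, so $\tilde D$, and hence $D$, is positive in one direction and negative in the other; since both perturbations remain in $S$ by openness, this produces the two configurations required by (b), completing the cycle.
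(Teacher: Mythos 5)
Your proof is correct, and its skeleton coincides with the paper's: both reduce (a) to ``(b) or (c)'' directly from the definition of fixedness, and both obtain (b) $\Rightarrow$ (c) from convexity of the region of configurations satisfying $\C$ together with continuity of the determinant (intermediate value theorem on the segment $[\P_1,\P_2]$). The divergence is in (c) $\Rightarrow$ (b), which the paper dispatches in a single asserted sentence --- ``one can add a matrix small enough to $M(\P)$ to get $\P'$ in the same region and such that $det(M(\P'))>0$, or also such that $det(M(\P'))<0$'' --- without explaining why both signs are attainable near an arbitrary zero. You correctly identify this as the only step with real content: a priori a polynomial can touch zero without changing sign (as $u \mapsto u^2$ does), so some structural property of $det(M)$ must be invoked, and the naive ``move one point off the hyperplane'' argument genuinely fails at highly degenerate configurations (e.g.\ when several points coincide, perturbing a single point leaves the determinant at zero). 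Your remedy --- translate the zero to the origin, observe that the shifted polynomial is still of degree at most one in each variable with no constant term, and evaluate along the two axis-aligned lines supported on a monomial of the lowest-degree homogeneous part, one line having a flipped coordinate so that the isolated squarefree monomial changes sign --- handles all degeneracies uniformly and stays inside the region by openness. What your route buys is a self-contained, fully rigorous justification of the perturbation step; the paper's terse version implicitly appeals to the known geometry of the determinantal hypersurface (near any singular matrix both signs of the determinant occur), which is true but left unproved there, and which moreover needs a word of care in this setting because the perturbation must preserve the row of ones and respect the orderings.
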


\begin{proof}
By definition we have a) if and only if b) or c) is true.
The region of the space $\mathbb{R}^{n\times (n-1)}$ whose elements $\P$ satisfy $\C$ is a convex and, topologically, an open set of points in $\mathbb{R}^{n\times (n-1)}$.
So b) implies c) by  convexity  and  continuity of the determinant.
Moreover, c) implies b) 
since, given $\P$ in this region such that $det(M(\P))=0$, one can add a matrix small enough to $M(\P)$ to get $\P'$ in the same region and such that $det(M(\P'))>0$, or also such that $det(M(\P'))<0$.
\end{proof}

%

%
Two configurations on $(\E,\B)$ are called \emph{equivalent} if they are equal up to a permutation of $\B$,  a permutation of $\E$ (relabelling), and some reversions of orderings (geometrical symmetries). Note that, in a matricial setting, changing a configuration into an equivalent one comes down to changing the orderings of rows and columns, and to multiplying some rows by $-1$.
 Obviously, those operations do not change the non-nullity of the determinant. Hence, two equivalent configurations are fixed or non-fixed simultaneously.
\bigskip

Now, given an ordering configuration ${\C}$, the aim of the paper is to determine if ${\C}$ is fixed or non-fixed.


\subsection{An application example}
\label{sec:example}

Let us consider ten anatomical landmark points  in $\mathbb{R}^3$ chosen by experts on the 3D model of a skull 
from \cite{Crane}, as shown in Figure~\ref{10ptsCrane}.
We choose a canonical basis $(O,\vec{x},\vec{y},\vec{z})$ such that the axis 
$\vec{x}$ goes from the right of the skull to its left, the axis $\vec{y}$ goes from the bottom of the skull to its top, and the axis $\vec{z}$ goes from the front of the skull to its back.
This 3D model has the specificity of being a skull, which
implies that some coordinate ordering relations are satisfied by those points: for instance the point 9 (right internal ear) will always be on the right, above and behind with respect to point 5 (right part of the chin).
Figures \ref{10ptsFace} and \ref{10ptsProfil} show those points respectively from the front and from the right of the model, with a grid representing those coordinate ordering relations.
\medskip

\begin{figure}[h]
	\centering
		\includegraphics[height=5cm]{./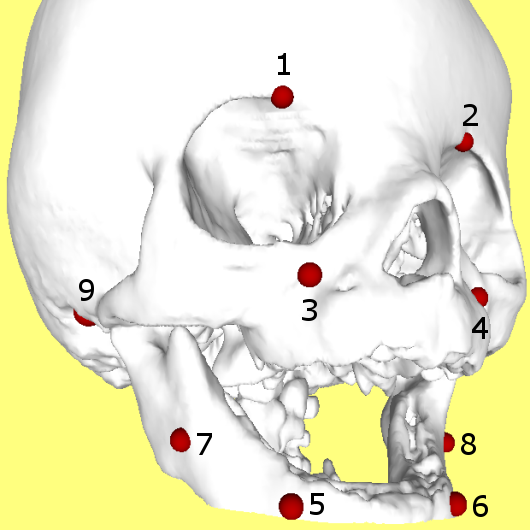}
	\caption{Ten anatomic points on a skull model \cite{Crane}}
	\label{10ptsCrane}
\end{figure}

\begin{figure}[h]

\begin{minipage}[c]{0.50\linewidth}
	\centering
		\includegraphics[height=4cm]{./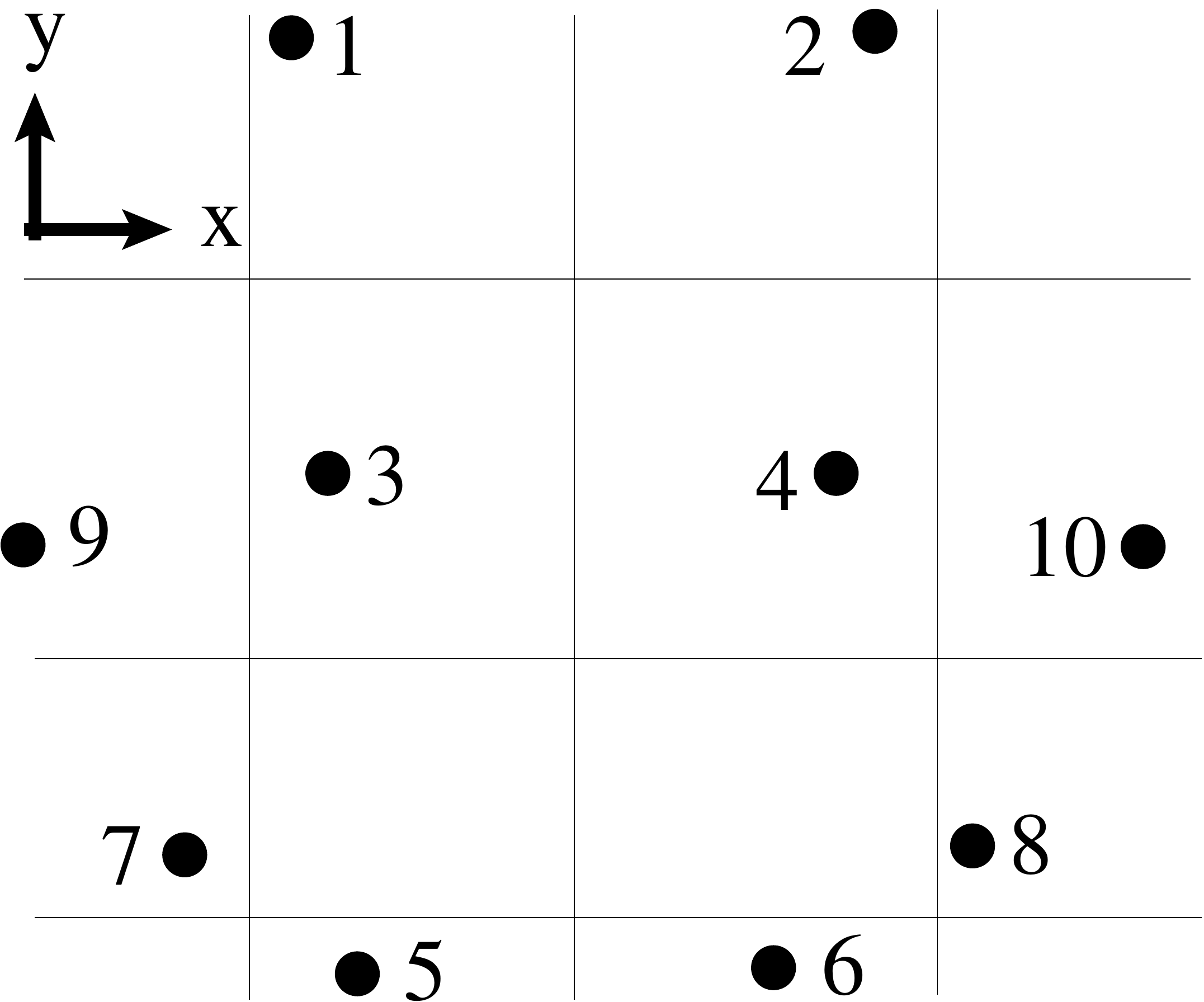}
	\caption{View from the front}
	\label{10ptsFace}
\end{minipage}
\begin{minipage}[c]{0.50\linewidth}
	\centering
		\includegraphics[height=4cm]{./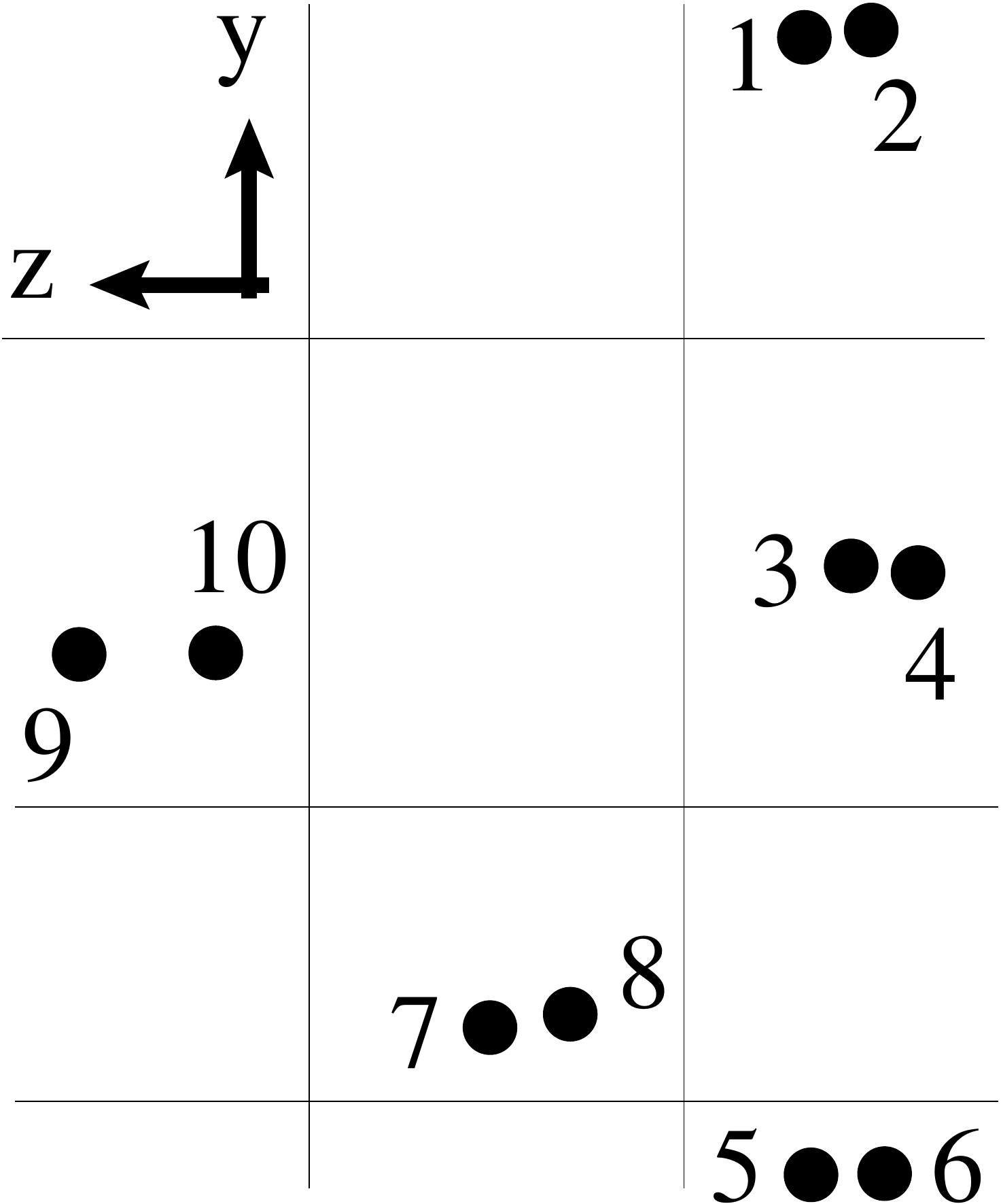}
	\caption{View from the right}
	\label{10ptsProfil}
\end{minipage}

\end{figure}

\begin{figure}[]
\begin{minipage}[b]{0.32\linewidth}
    \centering
        \input{./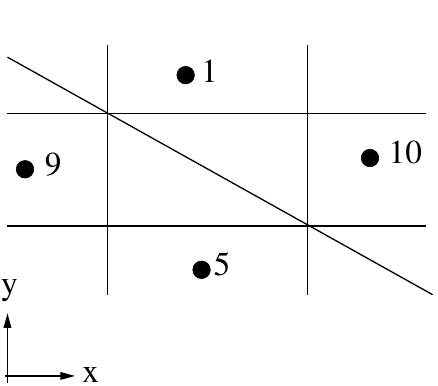_t}
    \caption{View from the front}
    \label{casEmericXY}
\end{minipage}
\hfill
\begin{minipage}[b]{0.32\linewidth}
    \centering
        \input{./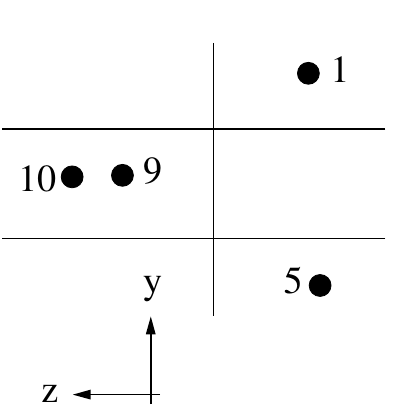_t}
    \caption{View from the left}
    \label{casEmericYZ}
\end{minipage}
\hfill
\begin{minipage}[b]{0.32\linewidth}
    \centering
        \input{./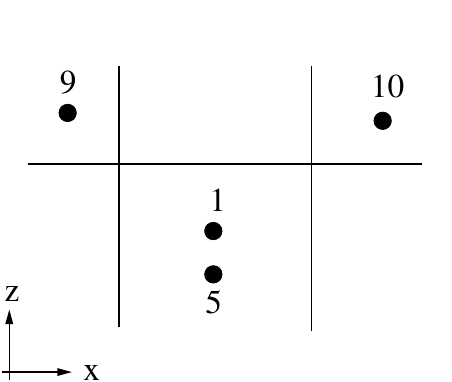_t}
    \caption{View from below}
    \label{casEmericXZ}
\end{minipage}
\end{figure} 



For application purpose  (e.g. in \cite{miccai}\cite{poster}), 
we are given such models, 
coming from various individuals
(with possible pathologies) 
and species
(e.g. primates and humans), 
by physical anthropology and clinical research experts
who are interested
 in  mathematically characterizing and classifying them.
In this paper, our aim is to detect which configurations are fixed independently of the real values of the landmarks. 
These particular configurations are interesting to detect: they mean that the 
corresponding
relative positions of points 
do not depend on some anatomical variabilities (e.g. on being a primate or a human skull), but only on the generic shape of the model (i.e. on being a skull).

\bigskip
The ordering configurations are represented in Figures \ref{10ptsFace} and \ref{10ptsProfil}, with $\E$ being any set of four points, and $\B$ corresponding to the three axis $\{x,y,z\}$.
As a preliminary example, let us consider the relations between  points labelled by $\E=\{1, 5, 9, 10\}$; we get the following configuration ${\C}$, as illustrated in Figures \ref{casEmericXY}, \ref{casEmericYZ} and \ref{casEmericXZ}:
\begin{displaymath}
\begin{array}{c}
9 <_x 1 <_x 10 \quad \text{ and } \quad 9 <_x 5 <_x 10\\
5 <_y 9 <_y 1 \quad \text{ and } \quad 5 <_y 10 <_y 1\\
1 <_z 9 \quad \text{ and } \quad 1 <_z 10 \quad \text{ and } \quad 5 <_z 9 \quad \text{ and } \quad 5 <_z 10
\end{array}
\end{displaymath}
%
%
%
%
Proving that this configuration is fixed 
can be seen as a
geometry exercise. 
The sketch is the following.
Let us prove that line $(1,5)$ and line $(9,10)$ cannot intersect: this implies that the points cannot belong to a same hyperplane, and hence form a simplex with fixed orientation.
Consider two planes $\alpha_1$, $\alpha_2$ parallel to the directions $y,z$, two planes $\beta_1$, $\beta_2$ parallel to the directions $x,z$, and one plane $\gamma$ parallel to the directions $x,y$, consistent with the coordinate orderings, as shown in Figures \ref{casEmericXY}, \ref{casEmericYZ} and \ref{casEmericXZ}.
These 5 planes separate $\mathbb{R}^3$ in 18 regions. 
Consider the plane $\delta$ 
containing the intersection of planes $\alpha_1,\beta_2,\gamma$, containing the intersections of planes $\alpha_2,\beta_1,\gamma$, 
and parallel to the direction $z$, as shown in Figure \ref{casEmericXY}.
Consider a region (among the 18 regions) intersecting $\delta$. 
Prove that, if line $(1,5)$ and line $(9,10)$ both intersect this region, then the two intersections are contained in two distinct parts of this region separated by $\delta$, meaning that the two lines do not intersect. 
The other cases (other regions) are either symmetric to this one or trivial.
%
%

\bigskip

In the rest of the paper, we develop tools to automatically detect fixed configurations,
 without having to use specific geometric constructions for each configuration as done above. Instead, our approach consists in unifying all configurations under a common combinatorial criterion. We will continue to study this example using  this approach in Section \ref{sec:ex-suite}.


\section{Computable fixity criteria and conjectures}
\label{sec:tools}


\subsection{From partial orderings to linear orderings}

%

We recall that a linear extension of an ordering on a set $\E$ is a linear ordering on $\E$ compatible with this ordering.
A \emph{linear extension} of an ordering configuration $\C$ on $(\E,\B)$ is a linear ordering configuration on $(\E,\B)$ obtained by replacing each ordering on $\E$ in $C$ by one of its linear extensions.

%
%
%

\begin{lemma}
\label{extLin}
Let $\C$ be a configuration on $(\E,\B)$.
%
If there exists a set $\P$ of $n$ points satisfying $\C$ and contained in a  hyperplane, then there exists a set of $n$ points $\P'$ contained in a hyperplane and a linear extension $\C'$ of $\C$ satisfied by $\P'$.
\end{lemma}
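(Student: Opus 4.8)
The plan is to work geometrically in the ambient space $\R^{n\times(n-1)}$, where the set $R$ of all point configurations satisfying $\C$ is an open convex region (as noted after the definition of \emph{satisfies}), and by hypothesis $R$ contains a point $\P$ with $\det(M(\P))=0$. I first reduce the statement to a cleaner target. Among the hyperplanes $x_{e,b}=x_{f,b}$, only those for which $e,f$ are \emph{incomparable} in $<_b$ actually meet $R$; the others are excluded by the strict inequalities that $\C$ imposes on $R$. These ``walls'' cut $R$ into open cells, and a point of $R$ lying off all of them has, in each row $b$, all coordinates distinct, hence a total order refining $<_b$; thus each cell is exactly the set of configurations satisfying one linear extension $\C'$ of $\C$. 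So the lemma reduces to showing that $\det(M)$ vanishes at some point of $R$ lying in the interior of one of these cells.

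The crux is a single algebraic fact: the polynomial $\det(M)$ is \emph{not divisible} by any wall form $x_{e,b}-x_{f,b}$. I would prove this by exhibiting one evaluation with $x_{e,b}=x_{f,b}$ and nonzero determinant (this is possible precisely because the affine dimension is at least $2$, i.e. $n\ge 3$: set the two columns equal in row $b$ and take the remaining entries generic). Since the linear form $x_{e,b}-x_{f,b}$ is irreducible, non-vanishing at one such point means $\det(M)$ does not vanish identically on the wall $H_0=\{x_{e,b}=x_{f,b}\}$; equivalently, $\det(M)\restriction_{H_0}$ is a nonzero polynomial, so its zero set has empty interior in $H_0$.

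Now I exploit the hypothesis together with Lemma \ref{lem:non-fixed}. Because $R$ contains a configuration with $\det(M)=0$, that lemma shows $\C$ is non-fixed, so there exist $\P_+,\P_-\in R$ with $\det(M(\P_+))>0$ and $\det(M(\P_-))<0$; a generic arbitrarily small perturbation moves both off every wall, hence into cell interiors. Suppose, for contradiction, that $\det(M)$ vanishes in \emph{no} cell interior; then $\det(M)$ has a constant nonzero sign on each cell. I join $\P_+$ to $\P_-$ by a segment inside $R$ (using convexity), chosen generic so that it crosses the union of walls only finitely often and each time in the relative interior of a single wall $H_0$, avoiding all intersections of two or more walls. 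Along this segment $\det(M)$ is nonzero on each cell-piece, so the sign change from $+$ to $-$ must take place exactly at one crossing point $p\in H_0$. By continuity $\det(M)(p)=0$, and evaluating at points of the two adjacent cells near $p$ forces $\det(M)=0$ on a full-dimensional neighborhood of $p$ inside $H_0$; a polynomial vanishing on an open subset of a hyperplane vanishes on the whole hyperplane, so $x_{e,b}-x_{f,b}$ divides $\det(M)$ --- contradicting the previous paragraph. Hence some cell interior contains a zero $\P'$ of $\det(M)$; this $\P'$ lies in a hyperplane and satisfies the linear extension $\C'$ attached to its cell, which is what we want.

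The main obstacle is precisely the step of locating a determinant zero \emph{off} the walls, i.e. of ruling out that the surface $\det(M)=0$ is ``trapped'' inside the tie-hyperplanes; the non-divisibility fact is exactly what forbids this, and it is also where the dimension hypothesis is genuinely needed (in affine dimension $1$, i.e. $n=2$, the determinant \emph{is} a wall form $x_{e,b}-x_{f,b}$ and the statement degenerates). The remaining ingredients --- convexity of $R$, continuity of $\det(M)$, and the intermediate value theorem along a generic segment --- are routine, so I expect the write-up effort to concentrate on the transversality bookkeeping of the segment and on the explicit non-vanishing evaluation.
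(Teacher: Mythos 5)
Your proof is correct (for $n\ge 3$), but it follows a genuinely different route from the paper's. The paper argues by direct, constructive perturbation: starting from the degenerate $\P$, it first separates coincident points by translating one of them along a nonzero vector parallel to the hyperplane containing $\P$ (this is where $n\ge 3$ enters for the paper), and then breaks any remaining ties $x_{e,b}=x_{f,b}$ by the shear operation of adding $\varepsilon$ times a distinguishing row $b'$ to row $b$ --- a row operation that preserves $\det=0$ and, for small $\varepsilon$, preserves satisfaction of $\C$; iterating yields a degenerate $\P'$ with all rows tie-free, whose row orders read off the linear extension $\C'$. Your argument is instead global and topological-algebraic: you decompose the region $R$ into cells indexed by linear extensions, reduce to finding a zero of $\det$ off the walls, and rule out the alternative by combining Lemma \ref{lem:non-fixed} (to get $\P_+,\P_-$ of opposite sign), an intermediate-value argument along a generic segment, and the key algebraic fact that $\det(M)$ is not divisible by any wall form $x_{e,b}-x_{f,b}$ --- which is exactly where your proof needs $n\ge3$, via the Zariski-type step that a polynomial vanishing on an open subset of a hyperplane is divisible by its defining linear form. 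Each approach has its merits: the paper's is more elementary and explicitly constructive (it even keeps $\P'$ in a controlled hyperplane and needs no facts about polynomials beyond multilinearity of the determinant), while yours isolates a conceptually satisfying reason the zero locus of $\det$ cannot be ``trapped'' in the tie-hyperplanes, and it localizes the dimension hypothesis in a single clean non-divisibility statement. Two small points to watch in a full write-up: the segment from $\P_+$ to $\P_-$ is determined by its endpoints, so genericity must be obtained by perturbing the endpoints within their cells to avoid codimension-two intersections of walls; and you correctly flag the degenerate case $n=2$, where the statement (and also the paper's own proof) breaks down --- the paper tacitly ignores this case since it is only applied for $n\ge3$.
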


\begin{proof}
Assume that, for every row of $M(\P)$ except the first,
all entries in this row are distinct. Then the linear orderings of these real values in each row define a linear ordering configuration $\C'$. This configuration $\C'$ is a linear extension of $\C$ since $\P$ satisfies $\C$. Then $\P'=\P$ and $\C'$ have the required properties.

Assume two columns with labels $e,f\in \E$ of $M(\P)$ are equal.
Let $v=(v_1,\ldots,v_{n-1})$ be a vector parallel to the (affine) hyperplane containing $\P$. Let $\P'$ be obtained by 
adding $\varepsilon.(0,v_1,\ldots,v_{n-1})$ to coordinates of the point labelled by $f$, for some $\varepsilon>0$. 
Obviously, the value $\varepsilon$ can be chosen small enough in order to have that $\P'$ still satisfies $\C$. By definition of $v$, $\P'$ is contained in the same hyperplane as $\P$.
Iteratively using this construction ultimately yields a set of points $\P'$ satisfying $\C$, contained in the same hyperplane as $\P$ and such that the columns of $M(\P')$ are all distinct.

Assume that in the row $b\in\B$ in $M(\P)$, the value in columns labelled by $e,f\in \E$ are the same. Up to transforming $\P$ as above, we assume that those two columns are not equal. Then, there exists a row $b'\in\B$ such that $x_{e,b'}\not=x_{f,b'}$. 
Let $\P'$ be the set of points whose matrix $M(\P')$ is obtained by adding $\varepsilon$ times row $b'$ to row $b$ in $M(\P)$, for some $\varepsilon>0$.
Obviously, the value $\varepsilon$ can be chosen small enough to have that $\P'$ still satisfies $\C$. Since the determinant of the matrices $M(\P)$ and $M(\P')$ are equal, $\P'$ is contained in a hyperplane. 
Using this construction iteratively  ultimately yields a set of points $\P'$ satisfying $\C$, contained in a hyperplane and satisfying the hypothesis presented in the first paragraph of this proof.
\end{proof}

\begin{proposition}
\label{prop:extension}
Let $\mathcal{C}$ be a configuration on $(\E,\B)$. 
The configuration $\mathcal{C}$ is non-fixed if and only if 
there exists a non-fixed linear extension of $\mathcal{C}$.
The configuration $\mathcal{C}$ is fixed if and only if 
every linear extension of $\mathcal{C}$ is fixed.
\end{proposition}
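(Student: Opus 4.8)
The two stated equivalences are contrapositives of one another, so it suffices to establish the first: $\mathcal{C}$ is non-fixed if and only if some linear extension of $\mathcal{C}$ is non-fixed. My plan is to characterize non-fixedness through the existence of a set of points on a hyperplane, using Lemma \ref{lem:non-fixed}(c), and then to transfer this witness between $\mathcal{C}$ and its linear extensions via Lemma \ref{extLin}.

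For the easier direction, suppose some linear extension $\mathcal{C}'$ of $\mathcal{C}$ is non-fixed. By Lemma \ref{lem:non-fixed} there is a set of points $\P'$ satisfying $\mathcal{C}'$ with $det(M(\P'))=0$. Since $\mathcal{C}'$ is a linear extension of $\mathcal{C}$, every ordering relation imposed by $\mathcal{C}$ is also imposed by $\mathcal{C}'$ (a linear extension only adds relations, never removes them), so any $\P'$ satisfying $\mathcal{C}'$ automatically satisfies $\mathcal{C}$. Hence $\P'$ is a set of points satisfying $\mathcal{C}$ and lying on a hyperplane, and by Lemma \ref{lem:non-fixed} again $\mathcal{C}$ is non-fixed.

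For the converse, suppose $\mathcal{C}$ is non-fixed. By Lemma \ref{lem:non-fixed}(c) there exists a set of $n$ points $\P$ satisfying $\mathcal{C}$ and contained in a hyperplane, i.e. with $det(M(\P))=0$. This is exactly the hypothesis of Lemma \ref{extLin}, which yields a set of points $\P'$ contained in a hyperplane together with a linear extension $\mathcal{C}'$ of $\mathcal{C}$ that is satisfied by $\P'$. Then $\P'$ witnesses, via Lemma \ref{lem:non-fixed}(c) applied to $\mathcal{C}'$, that the linear extension $\mathcal{C}'$ is non-fixed. This produces the desired non-fixed linear extension and closes the equivalence.

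The routine verifications here are the monotonicity observation that satisfying a linear extension implies satisfying the original configuration, which is immediate from the definition of \emph{satisfies} via the implication $e <_b f \Rightarrow x_{e,b}(\P) < x_{f,b}(\P)$. The one point that deserves genuine care is the backward direction: the naive attempt would be to keep the \emph{same} witness $\P$ and merely read off a linear ordering of each row, but this fails precisely when two points share a coordinate value in some row, so that the induced row order is only partial. This is exactly the degeneracy that Lemma \ref{extLin} is designed to remove, by perturbing $\P$ along the hyperplane (and along rows, using the determinant-preserving operation of adding a multiple of one row to another) until all row values are distinct while staying inside the region defined by $\mathcal{C}$ and on a hyperplane. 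Thus the main substance of the converse is already encapsulated in Lemma \ref{extLin}, and the proposition itself reduces to a clean bookkeeping argument combining the two lemmas.
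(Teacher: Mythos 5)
Your proof is correct and follows exactly the paper's own argument: both directions are handled by combining Lemma \ref{lem:non-fixed} (characterizing non-fixedness via a hyperplane witness) with Lemma \ref{extLin} (transferring that witness to a linear extension), plus the observation that satisfying a linear extension implies satisfying $\mathcal{C}$. Your added remark about why the naive ``read off the row orders'' idea fails and why Lemma \ref{extLin} is needed is accurate commentary, but the substance of the proof is the same as the paper's.
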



\begin{proof}
We prove the first assertion in the proposition. 
The second is obviously equivalent.
By Lemma \ref{lem:non-fixed}, 
if $\C$ is non-fixed then there exists a set $\P$ of $n$ points  satisfying $\C$ such that $det(M(P))=0$, that is $\P$ is contained in a hyperplane.
Lemma~\ref{extLin} implies that there exists $\P'$ satisfying a linear extension $\C'$ of $\C$ and such that $det(M(\P'))=0$. Hence $\C'$ is non-fixed.
Conversely, let $\C'$ be a non-fixed linear extension of $\C$.
By Lemma \ref{lem:non-fixed}, there exists $\P$ such that $det(M(P))=0$ and $\P$ satisfies $\C'$. In particular, $\P$ satisfies $\C$ , and hence $\C$ is non-fixed.
\end{proof}

With the above result, we only need to test the fixity of linear ordering configurations in order to deduce the fixity of any configuration. 
In the following, we will concentrate on linear ordering configurations. 


\subsection{Formal fixity}

Let $\C$ be a linear ordering configuration on $(\E,\B)$. 
We consider formal expressions of type $x_{e,b}-x_{f,b}$ for $e,f\in\E$, $e\not=f$, and $b\in\B$,
which we may sometimes denote $x_{e-f,b}$ for short.
Such a formal expression gets a \emph{formal sign w.r.t. $\C$} 
denoted $\sigma_\C(x_{e,b}-x_{f,b})$ and belonging to $\{\plus,\moins\}$, the following way: 
\begin{displaymath}
\begin{array}{c c c}
\sigma_\C(x_{e,b}-x_{f,b})\ =\ \ 
\plus \ \ \hbox{ if }\ \ f <_b e; 
&\hskip 2cm &
\sigma_\C(x_{e,b}-x_{f,b})\ =\ \ 
\moins\ \  \hbox{ if } \ \ e <_b f. 
\end{array}
\end{displaymath}
Recall that the polynomial $det(M_{\E,\B})$ is a multivariate polynomial on variables $x_{e,b}$ for $b\in\B$ and $e\in\E$.
Assume a particular formal expression of $det(M_{\E,\B})$ is a sum of multivariate monomials where each variable is replaced by some $x_{e,b}-x_{f,b}$, for $b\in\B$ and $e,f\in\E$.
Various expressions of this type can be obtained by suitable transformations and determinant cofactor expansions from the matrix $M$, as we will do more precisely below.
This particular expression of $det(M_{\E,\B})$ gets a \emph{formal sign w.r.t. $\C$} 
belonging to $\{\plus,\moins, \indet\}$, by replacing each expression of type $x_{e,b}-x_{f,b}$ with its formal sign $\sigma_\C(x_{e,b}-x_{f,b})$ 
and applying the following formal calculus rules:
\begin{displaymath}
\begin{array}{c c c}
\plus \cdot \plus = \moins \cdot \moins = \plus, &\hskip 2cm &\plus \cdot \moins = \moins \cdot \plus = \moins, 
\end{array}
\end{displaymath}
\begin{displaymath}
\begin{array}{c c c}
\plus + \plus = \plus - \moins =\plus,&\hskip 1cm &\moins + \moins = \moins - \plus = \moins,
\end{array}
\end{displaymath}
\begin{displaymath}
\begin{array}{c}
\plus + \moins =\moins + \plus= \indet,
\end{array}
\end{displaymath}
and the result of any operation involving a $\indet$ term or factor is also $\indet$.
\smallskip

We say that $\C$ is \emph{formally fixed}
 if $det(M_{\E,\B})$ has such a formal expression whose formal sign is not $\indet$.
 
 \bigskip
\noindent{\it Example.}
Consider the following matrix $M=M_{\E,\B}$ for $\E=\{a,b,c\}$ and $\B=\{1,2\}$:

\begin{minipage}[c]{1\linewidth}
\begin{displaymath}
M = \left(\begin{array}{c c c}
1&1&1\\
x_{a,1}&x_{b,1}&x_{c,1}\\
x_{a,2}&x_{b,2}&x_{c,2}
\end{array}\right)
\end{displaymath}
\end{minipage}
and consider the configuration $\C$ defined by:

\begin{minipage}[c]{1\linewidth}
\begin{displaymath}
\begin{array}{c c c c c}
a &<_1& b &<_1& c \\
b &<_2& c &<_2& a \\
\end{array}
\end{displaymath}
\end{minipage}
A formal expression of $det(M)$ is:

\begin{minipage}[c]{1\linewidth}
\begin{displaymath}
det(M)=
x_{b-a,1}\cdot x_{c-a,2} -  x_{b-a,2}\cdot x_{c-a,1}
\end{displaymath}
\end{minipage}
whose formal sign w.r.t. $\C$ is

\begin{minipage}[c]{1\linewidth}
\begin{displaymath}
\plus\cdot\moins - \moins\cdot\plus\ \ =\ \ \indet.
\end{displaymath}
\end{minipage}
Another formal expression of $det(M)$ is:

\begin{minipage}[c]{1\linewidth}
\begin{displaymath}
det(M)=
x_{b-a,1}\cdot x_{c-b,2} -  x_{b-a,2}\cdot x_{c-b,1}
\end{displaymath}
\end{minipage}
whose formal sign w.r.t. $\C$ is

\begin{minipage}[c]{1\linewidth}
\begin{displaymath}
\plus\cdot\plus - \moins\cdot\plus\ \ =\ \ \plus.
\end{displaymath}
\end{minipage}
This second expression shows that $\C$ is formally fixed.
\bigskip

 \begin{observation}
 \label{obs:alg-fixed}
If $\C$ is formally fixed, then $\C$ is fixed.
 \end{observation}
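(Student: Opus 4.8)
The plan is to show directly that a determinate formal sign forces the actual determinant to carry that same sign at every point set satisfying $\C$. Fix a formal expression $E$ of $det(M_{\E,\B})$ witnessing formal fixity, so that $\sigma_\C(E)\in\{\plus,\moins\}$, and set $s:=\sigma_\C(E)$. For any $\P$ satisfying $\C$, write $E(\P)$ for the real number obtained by evaluating $E$ at the real values $x_{e,b}(\P)$; since $E$ is merely a rewriting of $det(M_{\E,\B})$, we have $E(\P)=det(M(\P))$. The goal is to prove that $E(\P)$ is nonzero with $\mathrm{sign}(E(\P))=s$ for every such $\P$, which immediately yields that all point sets satisfying $\C$ form simplices with the common orientation $s$, that is, that $\C$ is fixed.

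First I would establish the base case. Each elementary factor $x_{e,b}-x_{f,b}$ appearing in $E$ evaluates, by the definition of ``$\P$ satisfies $\C$'', to a real number whose sign is exactly $\sigma_\C(x_{e,b}-x_{f,b})$: indeed $f<_b e$ forces $x_{f,b}(\P)<x_{e,b}(\P)$, hence $x_{e,b}(\P)-x_{f,b}(\P)>0$, matching $\plus$, and symmetrically for $\moins$. In particular each such factor is nonzero at $\P$.

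Next I would argue by structural induction on the subexpressions of $E$ that, whenever a subexpression $F$ has $\sigma_\C(F)\neq\indet$, its evaluation $F(\P)$ is nonzero with $\mathrm{sign}(F(\P))=\sigma_\C(F)$. The point is that the three-valued sign algebra on $\{\plus,\moins,\indet\}$ is a sound abstraction of the ordinary sign function on nonzero reals. For a product $F=F_1\cdot F_2$, a determinate $\sigma_\C(F)$ forces both $\sigma_\C(F_1)$ and $\sigma_\C(F_2)$ to be determinate, since any $\indet$ factor propagates to $\indet$; by induction both factors then evaluate to nonzero reals of the stated signs, and the product rules $\plus\cdot\plus=\moins\cdot\moins=\plus$ and $\plus\cdot\moins=\moins\cdot\plus=\moins$ coincide with the rule of signs for real multiplication. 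For a sum or difference $F=F_1\pm F_2$, a determinate $\sigma_\C(F)$ forces, by the addition rules, both terms to be determinate and to carry the same contributing sign (for a difference, the contributing sign of the second term being its negation); it is precisely the clash giving $\plus+\moins=\moins+\plus=\indet$ that is thereby excluded, so by induction the two real terms have equal sign and their combination inherits it. Applying this to $F=E$ yields $\mathrm{sign}(E(\P))=s$.

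I expect the only delicate point to be this sum step, where I must verify that ``$\sigma_\C(F)\neq\indet$'' genuinely guarantees that the two contributing real quantities agree in sign and so cannot cancel; this is exactly the content of $\plus+\plus=\plus$, $\moins+\moins=\moins$, together with $\plus+\moins=\moins+\plus=\indet$ and the absorbing behaviour of $\indet$, which together ensure that no determinate value conceals a cancellation of opposite signs. Once this soundness of the formal calculus is established, the conclusion follows: since $det(M(\P))=E(\P)$ has the fixed nonzero sign $s$ for all $\P$ satisfying $\C$, the configuration $\C$ is fixed.
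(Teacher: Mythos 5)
Your proof is correct and follows essentially the same route as the paper, which treats this observation as immediate: the paper merely remarks that evaluating a formal expression with determinate sign at any $\P$ satisfying $\C$ yields a real number of that sign, i.e.\ that the sign calculus is sound for real evaluation. Your structural induction on subexpressions simply makes explicit the soundness argument the paper leaves implicit, so there is no difference in substance.
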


More precisely, given an expression (as above) whose formal sign w.r.t. $\C$ is $\plus$ or $\moins$,
the evaluation of this determinant for any set of real values $\P$ satisfying $\C$ necessarily provides a real number whose sign is consistent with the formal sign of this expression. In this case, this resulting sign does not depend on the chosen expression, as long as it is not $\indet$, and $\sigma_C(det(M))$ 
equals this sign.
\medskip


Conversely, 
one may wonder if
for every fixed configuration $\C$ there would exist a suitable expression of the determinant formally showing  in the above way that $\C$ is fixed.
That is, equivalently, do we have: if every formal expression of $det(M_{\E,\B})$ has formal sign $\indet$, then 
$\sigma_C(det(M))=\spm\ $?
We strongly believe in this result, which we state as a conjecture, and which we will prove for $n\leq 4$ (see Theorems \ref{fixe2D} and \ref{caract}).

\begin{conjecture}
\label{conjecture1}
Let $\C$ be a linear ordering configuration on $(\E,\B)$. Then $\C$ is fixed if and only if $\C$ is formally fixed.
\end{conjecture}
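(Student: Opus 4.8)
The plan is to treat the two implications separately. One direction---formally fixed implies fixed---is exactly Observation \ref{obs:alg-fixed}, so the entire content lies in the converse: every fixed configuration is formally fixed. By Proposition \ref{prop:extension} it suffices to treat linear ordering configurations, so I would fix $\C$ linear throughout. I would then argue the contrapositive in the form supplied by Lemma \ref{lem:non-fixed}: assuming that \emph{every} formal expression of $\det(M_{\E,\B})$ has formal sign $\indet$, I would produce a set of points $\P$ satisfying $\C$ with $\det(M(\P))=0$, thereby witnessing that $\C$ is non-fixed.

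The engine would be an inductive cofactor expansion, with induction on $n$. Subtracting a chosen reference column $e$ from all the others turns $\det(M_{\E,\B})$ into the $(n-1)\times(n-1)$ determinant $D$ of the difference variables $x_{f-e,b}$; expanding $D$ along the row of a chosen coordinate $b$ writes it as an alternating sum $\sum_{f\ne e}\pm\,x_{f-e,b}\cdot\det\big(M_{\E\setminus\{f\},\,\B\setminus\{b\}}\big)$, where each cofactor is, up to sign, the affine determinant of the sub-configuration obtained by deleting the element $f$ and the coordinate $b$. This is precisely the inductive structure I want: the minors live one dimension lower and carry the induced linear orderings, so formal fixity of each sub-configuration is available from the induction hypothesis. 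A formal expression of $\det(M_{\E,\B})$ is then assembled from a choice of $e$, a choice of $b$, and recursively chosen expressions of the surviving minors.

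The inductive step I would aim to prove is: if $\C$ is fixed, then $e$ and $b$ can be chosen (possibly after preliminary column operations, for instance exploiting an element that is \emph{extreme} in some coordinate, so that all its difference variables share a sign) in such a way that the surviving minors are themselves fixed---hence formally fixed by induction---and that their signs, combined with the fixed signs of the factors $x_{f-e,b}$, make every nonzero term of the alternating sum carry one common sign. The main obstacle is exactly this sign-coherence: a priori the cofactors may individually be formally fixed yet combine, through the alternating signs of the Laplace expansion, into $\plus+\moins=\indet$. Ruling this out uniformly is the crux, and it is where a genuine combinatorial characterization of fixed configurations is needed. For $n\le 4$ I would obtain such a characterization directly (this being the content of Theorems \ref{fixe2D} and \ref{caract}), starting from the trivial base case and performing a finite case analysis that, for each fixed configuration, exhibits a witnessing expansion with coherent signs and, for each non-fixed one, constructs via Lemma \ref{lem:non-fixed} an explicit degenerate $\P$.

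Conceptually, Observation \ref{obs:alg-fixed} says that every non-$\indet$ formal expression is a sign certificate for $\det(M_{\E,\B})$ on the region cut out by $\C$, and the conjecture asserts that this restricted family of certificates---difference variables combined through cofactor expansions---is complete for the class of fixed configurations. The honest gap for general $n$ is that the case analysis does not obviously stabilize: what is missing is a uniform, dimension-independent reason why sign-coherence can always be arranged, which is why I expect the statement to remain a conjecture beyond $n=4$ and the finite analysis to be unavoidable for the proved cases.
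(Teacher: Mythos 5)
This statement is an open conjecture in the paper: no proof for general $n$ exists there, only the cases $n\le 4$ (Theorems \ref{fixe2D} and \ref{caract}), and your plan reproduces exactly the paper's own structure --- Observation \ref{obs:alg-fixed} for the easy direction, the inductive cofactor-expansion framework of Conjecture \ref{conjecture2} (together with the paper's remark that Conjecture \ref{conjecture1} in dimension $n-1$ plus Conjecture \ref{conjecture2} in dimension $n$ yields Conjecture \ref{conjecture1} in dimension $n$) for the converse, Lemmas \ref{lem:non-fixed} and \ref{non-fixe} for producing degenerate point sets, and a finite case analysis in low dimension. You correctly refrain from claiming a proof for general $n$, and the sign-coherence obstruction you isolate is precisely the open crux; the only structural divergence is that your expansion fixes a coordinate $b$ and sums over elements $f$, whereas the paper's expansion with respect to $(e_i,e_j)$ fixes a pair of elements and sums over coordinates, but both are legitimate formal expressions, so this difference is immaterial for the notion of formal fixity in Conjecture \ref{conjecture1}.
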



\subsection{Formal fixity by expansion}

Let $\C$ be a configuration on $(\E,\B)$, and $\E'=\E\setminus\{e\}$, $\B'=\B\setminus\{b\}$ for some $e\in\E$, $b\in\B$. 
We call \emph{configuration induced by $\C$ on $(\E',\B')$} the configuration on $(\E',\B')$  obtained by restricting every ordering $<_{b'}$, $b'\in \B'$, of $\C$ to $\E'$.
 Moreover, we say that \emph{all the configurations induced by $\C$ on $\mathcal{E}'$ are fixed} if, for every  $b\in \B$,
 the configuration induced by $\C$ on $({\E}',\B\setminus\{b\})$ is a fixed configuration.
Note that, from a geometrical viewpoint, if $\P$ is a set of points satisfying $\C$, and $\P_e$ is obtained by removing the point with label $e\in\E$  from $\P$, then the projection $\P'$ of $\P_e$ on $\B'$ along $b$ satisfies $\C'$. Indeed, the matrix $M_{\E',\B'}$, resp. $M_{\E',\B'}(\P')$, is obtained by removing
the column corresponding to $e$ and the row corresponding to $b$ from $M_{\E,\B}$, resp. $M_{\E,\B}(\P)$.
\smallskip

As previously, let  $M=M_{\E,\B}$ with $\E=\{e_1,...,e_{n}\}_<$ and $\B=\{b_1,...,b_{n-1}\}_<$.
Let $e_i,e_j\in\E$, with $e_i\not=e_j$. 
Consider the matrix obtained from $M$ by subtracting the $j$-th column (corresponding to $e_j$),  from the $i$-th column (corresponding to $e_i$), that is:
\begin{displaymath}
\left(\begin{array}{c c c c c c c}
1&\ldots&1&0&1&\ldots&1\\
x_{e_1,b_1}&\ldots&x_{e_{i-1},b_1}&x_{e_i,b_1}-x_{e_j,b_1}&x_{e_{i+1},b_1}&\ldots&x_{e_n,b_1}\\
x_{e_1,b_2}&\ldots&x_{e_{i-1},b_2}&x_{e_i,b_2}-x_{e_j,b_2}&x_{e_{i+1},b_2}&\ldots&x_{e_n,b_2}\\
\vdots&&\vdots&\vdots&\vdots&&\vdots\\
x_{e_1,b_{n-1}}&\ldots&x_{e_{i-1},b_{n-1}}&x_{e_i,b_{n-1}}-x_{e_j,b_{n-1}}&x_{e_{i+1},b_{n-1}}&\ldots&x_{e_n,b_{n-1}}
\end{array}\right)
\end{displaymath}

The determinant of this matrix equals $det(M)$. 
The cofactor expansion formula for the determinant of this matrix w.r.t. its $i$-th column yields:
$$det(M_{\E,\B})\ =\ \sum_{k=1}^{n-1}\ (-1)^{i+k+1}\ \cdot\ (x_{e_i,b_k}-x_{e_j,b_k})\ \cdot\ det\bigl(\ M_{\E\setminus\{e_i\},\B\setminus\{b_k\}}\ \bigr)
$$
which we call \emph{expression of $det(M)$ by expansion with respect to $(e_i,e_j)$}.
\bigskip

%
Then the above particular expression of $det(M)$ 
gets a \emph{formal sign} w.r.t. $\C$ in the following manner.
First, replace 
each expression of type $x_{e,b}-x_{f,b}$ with its formal sign w.r.t. $\C$ in $\{\plus,\moins\}$,
 and replace each $det(M_{\E\setminus\{e_i\},\B\setminus\{b_k\}})$, $1\leq k\leq n-1$,
with its  sign  
$\sigma_{\C_k}(det(M_{\E\setminus\{e_i\},\B\setminus\{b_k\}}))\in\{\plus,\moins,\spm\}$,
where $\C_k$ is the configuration induced by $\C$ on 
$(\E\setminus\{e_i\},\B\setminus\{b_k\})$.
 This leads to the formal expression:
$$\sum_{k=1}^{n-1}\ (-1)^{i+k+1}\ \cdot\ \sigma_\C(x_{e_i,b_k}-x_{e_j,b_k})\ \cdot\ \sigma_{C_k}\Bigl(det\bigl(\ M_{\E\setminus\{e_i\},\B\setminus\{b_k\}}\ \bigr)\Bigr),
$$
Then, provide the formal sign of this expression by using the same formal calculus rules as previously, completed with the following one:
$$\plus \cdot \spm\ \ =\ \ \moins \cdot \spm\ \ =\ \ \indet.$$


\bigskip
If there exists such an expression of $det(M)$ by expansion whose formal sign is $\plus$ or $\moins$, 
then $\C$ is called \emph{formally fixed by expansion}.


 \begin{observation}
 \label{obs:alg-fixed-proj}
If $\C$ is formally fixed by expansion, then $\C$ is fixed.
 \end{observation}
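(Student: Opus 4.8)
The plan is to pin down a single pair $(e_i,e_j)$ witnessing formal fixity by expansion and then evaluate the corresponding cofactor expansion at an arbitrary point set satisfying $\C$, showing that every summand carries one and the same strict sign. So suppose the expression of $\det(M)$ by expansion with respect to $(e_i,e_j)$ has formal sign $\plus$ (the case $\moins$ being symmetric). Since any operation involving a $\spm$ factor yields $\indet$, and since $\plus+\moins=\indet$, the hypothesis that the total is not $\indet$ forces two things simultaneously: first, for every $k$ the induced configuration $\C_k$ on $(\E\setminus\{e_i\},\B\setminus\{b_k\})$ must be fixed, so that $\sigma_{\C_k}\bigl(\det(M_{\E\setminus\{e_i\},\B\setminus\{b_k\}})\bigr)\in\{\plus,\moins\}$ rather than $\spm$; and second, each of the $n-1$ individual formal summands must itself equal $\plus$, as a single $\moins$ summand would already produce $\indet$.

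Next I would evaluate the determinant identity at an arbitrary $\P$ satisfying $\C$ and read off the real sign of each summand. Because $<_{b_k}$ is a linear order and $e_i\neq e_j$, the real number $x_{e_i,b_k}(\P)-x_{e_j,b_k}(\P)$ is strictly nonzero and its sign coincides with the formal sign $\sigma_\C(x_{e_i,b_k}-x_{e_j,b_k})$. For the cofactors, I would invoke the projection remark: deleting the column $e_i$ and the row $b_k$ from $M(\P)$ produces exactly $M_{\E\setminus\{e_i\},\B\setminus\{b_k\}}(\P'_k)$, where $\P'_k$ is the projection along $b_k$ of $\P$ with its point labelled $e_i$ removed, and $\P'_k$ satisfies $\C_k$. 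Since $\C_k$ is fixed, this minor is nonzero and its real sign equals $\sigma_{\C_k}\bigl(\det(M_{\E\setminus\{e_i\},\B\setminus\{b_k\}})\bigr)$.

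It then remains to assemble the terms. As the real sign of a product is the product of the real signs of its factors, and the formal multiplication rules encode precisely this law, the real sign of the $k$-th summand $(-1)^{i+k+1}(x_{e_i,b_k}(\P)-x_{e_j,b_k}(\P))\det(M_{\E\setminus\{e_i\},\B\setminus\{b_k\}}(\P'_k))$ equals the formal sign of the $k$-th formal summand, namely $\plus$; moreover each such summand is strictly nonzero by the two nonvanishing facts just established. Hence $\det(M(\P))$ is a sum of strictly positive reals, so $\det(M(\P))>0$. Since $\P$ was an arbitrary point set satisfying $\C$, the sign of $\det(M(\P))$ is constant and nonzero, which is exactly the statement that $\C$ is fixed.

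I expect the main obstacle to lie in the second paragraph: correctly identifying which minor of $M(\P)$ is being evaluated and matching it, through the projection correspondence, to the fixed induced configuration $\C_k$, together with the clean logical step that a non-$\indet$ total simultaneously forces the fixity of every $\C_k$ and the coincidence of all summand signs. The surrounding bookkeeping (strict nonvanishing of each difference, sign of a product) should be routine once that correspondence is set up.
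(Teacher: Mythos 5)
Your proof is correct and takes essentially the same route as the paper, which only sketches the argument: evaluating the expansion at any $\P$ satisfying $\C$, the real sign of each difference $x_{e_i,b_k}(\P)-x_{e_j,b_k}(\P)$ matches its formal sign, and each cofactor, via the projection correspondence, evaluates a point set satisfying the induced configuration $\C_k$, so the formal sign of the expression forces the sign of $det(M(\P))$. The two details you spell out — that a non-$\indet$ total forces every $\C_k$ to be fixed, and that all summands must carry the same formal sign — are exactly the points the paper notes (the first explicitly, just below the observation), so your write-up is a faithful, fully detailed version of the paper's argument.
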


The above observation is similar to Observation \ref{obs:alg-fixed}:
if  $\C$ is formally fixed by expansion then $\sigma_\C(det(M))$ is given as the formal sign of any expression certifying that $\C$ is formally fixed by expansion.
Notice that if $\C$ is formally fixed by expansion then all the configurations $\C_k$ induced by $\C$  are fixed, since we must have 
$\sigma_{\C_k}(det(M_{\E\setminus\{e_i\},\B\setminus\{b_k\}}))\in\{\plus,\moins\}$.

\begin{conjecture}
\label{conjecture2}
Let $\C$ be a linear ordering configuration on $(\E,\B)$. Then $\C$ is fixed if and only if $\C$ is formally fixed by expansion.
\end{conjecture}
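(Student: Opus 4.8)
The plan is to establish the nontrivial implication, that every fixed linear ordering configuration is formally fixed by expansion, the converse being exactly Observation~\ref{obs:alg-fixed-proj}. I would argue by induction on $n=|\E|$, with the base cases supplied by the low-dimensional analysis (Theorems~\ref{fixe2D} and~\ref{caract}), and with Proposition~\ref{prop:extension} justifying the restriction to linear configurations throughout.

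First I would reduce the inductive step to a single combinatorial existence statement. An expansion with respect to a pair $(e_i,e_j)$ involves only the facet configurations $\C_k=\C|_{(\E\setminus\{e_i\},\,\B\setminus\{b_k\})}$, which depend on the \emph{deleted} point $e_i$ alone, together with the edge differences $x_{e_i,b_k}-x_{e_j,b_k}$, each of which has a determinate and nonzero formal sign since $<_{b_k}$ is linear. Hence such an expansion is determinate if and only if (i) all $n-1$ facets $\C_k$ are fixed, and (ii) the $n-1$ signed terms do not cancel. By the induction hypothesis, condition~(i) is equivalent to each $\C_k$ being formally fixed by expansion, so the whole problem becomes: \emph{show that a fixed $\C$ admits a pivot $(e_i,e_j)$ satisfying (i) and (ii)}.

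For condition~(i) I would choose the deleted point $e_i$ by contraposition: if for \emph{every} $e_i\in\E$ some facet $\C_k$ were non-fixed, I would assemble, using Lemma~\ref{lem:non-fixed} and the fact that the orderings $<_b$ constrain the coordinate rows independently, a single realization $\P$ satisfying $\C$ with $\det(M(\P))=0$, contradicting fixity; here the linearity of the determinant in each coordinate row is what lets a vanishing facet minor be propagated to a vanishing full determinant. For condition~(ii), with $e_i$ now fixed, each facet carries a determinate sign $s_k\in\{\plus,\moins\}$ and the $k$-th term has sign $(-1)^{i+k+1}\,\epsilon_k\,s_k$, where $\epsilon_k$ records whether $e_j$ lies below or above $e_i$ in coordinate $b_k$. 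Non-cancellation means this sign is independent of $k$, i.e. the side-pattern $(\epsilon_k)_k$ must match a prescribed pattern up to one global choice; I would again argue by contraposition that, were no admissible $e_j$ to exist, one could perturb a realization of $\C$ so as to reverse the sign of $\det(M)$ while remaining in the region of $\C$, contradicting fixity.

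The main obstacle is precisely the construction of these degenerate or sign-reversing realizations, together with the fact that Steps~(i) and~(ii) may not be separable: a point $e_i$ making all facets fixed need not admit a non-cancelling partner $e_j$, so the two choices may have to be made jointly over all pairs. The genuine difficulty is that the edge differences and the facet minors entering one expansion are not free parameters but simultaneous evaluations of a single point set $\P$, so the realization witnessing a contradiction must be built coherently across all coordinates at once, exploiting only the per-row freedom granted by the independent orders $<_b$. Converting a global combinatorial failure of \emph{every} pivot into one such coherent geometric degeneracy in arbitrary dimension is exactly what we are unable to control in general, which is why the statement remains a conjecture; the argument above is what we expect to succeed, and is what we verify directly for $n\le 4$.
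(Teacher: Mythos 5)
You are attempting to prove a statement that the paper itself leaves open: Conjecture~\ref{conjecture2} is established in the paper only for $n\le 4$ (Theorems~\ref{fixe2D} and~\ref{caract}), so there is no general proof to compare yours against, and any complete argument would be new mathematics. Your reduction is sound as far as it goes: the easy implication is indeed Observation~\ref{obs:alg-fixed-proj}, and an expansion w.r.t.\ $(e_i,e_j)$ has determinate formal sign exactly when every induced facet configuration $\C_k$ is fixed (equivalently, by your induction hypothesis, formally fixed by expansion) and the $n-1$ signed terms all agree. This matches the paper's own framing of the problem, and your base cases are the ones the paper actually proves.

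The genuine gap lies in the two contraposition steps, which are the whole difficulty and which, as sketched, would fail. For step~(i), the mechanism you invoke --- that ``linearity of the determinant in each coordinate row lets a vanishing facet minor be propagated to a vanishing full determinant'' --- is not valid on its own: making one cofactor vanish says nothing about the sum of the remaining terms, and the witness $\P$ must stay inside the open region of $\C$ while forcing $det(M(\P))=0$. The paper's only propagation tool is Lemma~\ref{non-fixe}, which crucially requires the deleted point $e$ to be \emph{extreme} in the deleted ordering $<_b$; extremeness is what allows the coefficient $x_{e,b}-x_{f,b}$ to be scaled until it dominates every other term of the expansion, and your sketch never secures such an extreme pivot. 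Even granting that tool, the $n=4$ instance of your step~(i) is exactly Proposition~\ref{proj_non-fixe=>non-fixe}, whose proof in the paper is an elaborate case analysis (conformal triplets, bipartite incidence graphs, maximality arguments) with no evident generalization; the authors even decline to state the general-$n$ version of the resulting non-fixity criterion (Theorem~\ref{caract-non-fixed}) as a conjecture, calling it too hazardous. Step~(ii) has the same defect: ``perturb a realization so as to reverse the sign of $det(M)$'' is asserted, not constructed, and such perturbations are constrained by the same coherence problem you correctly identify. You candidly acknowledge all of this in your final paragraph, which is to your credit, but it means the proposal is a research plan rather than a proof: both key lemmas it needs are unproved, and one of them is precisely what the authors themselves could not push past $n=4$.
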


We point out that if Conjecture \ref{conjecture1} is true in dimension $n-1$, then Conjecture \ref{conjecture2} in dimension $n$ implies Conjecture \ref{conjecture1} in dimension $n$.
Indeed, in this case, the fixity of the $(n-1)$-dimensional configurations corresponding to cofactors can be determined using formal expressions. 
\medskip

\begin{remark}
\label{rk:SNS}
Assume $n=4$, consider any  $3\times 3$ matrix $M'$ obtained from $M$ by subtracting some columns, and deleting the first row and one column, so that every entry in $M'$ is of type $x_{e,b}-x_{f,b}$ for $e,f\in\E$ and $b\in\B$. We have  either $det(M')=det(M)$ or $det(M')=-det(M)$. Then replace  in the matrix $M'$ each formal expression $x_{e,b}-x_{f,b}$ with its formal sign $\sigma_\C(x_{e,b}-x_{f,b})$ w.r.t. to a given configuration $\C$. 
We obtain a $3\times 3$ matrix $N$ with entries in $\{\plus,\moins\}$.
The point of this remark is that formally computing the sign of the determinant of the matrix $N$, using the same formal rules as above, will always provide the result $\indet$. The proof of this property is left as an exercise to the reader. 
In fact, 
as already noticed in the introduction of the paper,
this property generalizes in any dimension, it is known as:
an SNS-matrix of order $n\geq 3$ has at least one
zero, see \cite[page 108]{SNS}.
This shows that a  formal matrix $M'$, such as the above one, cannot be used alone to derive a formal expression of the determinant of the original matrix $M$ proving the fixity of a configuration. One would always need to transform submatrices of $M$, which is what we do implicitly by the inductive use of formal signs of induced configurations in order to determine formal fixity.
\end{remark}
\medskip
%
%
%
%
%


Finally, the point of this paper is to deal with the property of being formally fixed by expansion as an inductive criterion for fixity.
Next, we will prove Conjecture \ref{conjecture2} for $n=4$,
providing at the same time more precise and direct characterizations in this case (see Theorem \ref{caract}).


\subsection{A non-fixity criterion}

The following Lemma \ref{non-fixe} will be our main tool to prove that a configuration is non-fixed.
We point out that, when $n=4$, the sufficient condition for being non-fixed provided by Lemma \ref{non-fixe}  turns out to be a  necessary and sufficient condition (see Theorem \ref{caract-non-fixed}).
However, the authors feel that this equivalence result is 
too hazardous 
to be stated as a general conjecture in dimension $n$.



\begin{lemma}
\label{non-fixe}
Let ${\C}$ be a configuration on $(\E,\B)$.
If there exist $e\in\E$ and $b\in\B$ satisfying the following properties:
\begin{itemize}
	\item[\textbullet] $e$ is extreme in the ordering $<_b$ of ${\C}$ and
	\item[\textbullet] the configuration ${\C}'$ induced by ${\C}$ on $(\E\setminus \{e\},\B\setminus\{b\})$ is non-fixed,
\end{itemize}
then ${\C}$ is non-fixed.
\end{lemma}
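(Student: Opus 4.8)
The plan is to translate non-fixity into the language of witnesses via Lemma~\ref{lem:non-fixed}, and then to \emph{lift} a witness of the non-fixity of $\C'$ from dimension $n-2$ to dimension $n-1$. Write $\E'=\E\setminus\{e\}$ and $\B'=\B\setminus\{b\}$. Since $\C'$ is non-fixed, part~(b) of Lemma~\ref{lem:non-fixed} supplies two point sets $\P'_1,\P'_2$ in $\R^{n-2}$, both satisfying $\C'$, with $\det(M_{\E',\B'}(\P'_1))>0$ and $\det(M_{\E',\B'}(\P'_2))<0$. My goal is to build, from each $\P'_i$, a point set $\P_i$ in $\R^{n-1}$ that satisfies $\C$ and such that $\det(M(\P_1))$ and $\det(M(\P_2))$ have opposite signs; the equivalence of (a) and (b) in Lemma~\ref{lem:non-fixed} then gives that $\C$ is non-fixed. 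Up to reversing $<_b$ (which does not affect $\C'$, and which yields an equivalent, hence simultaneously fixed or non-fixed, configuration), I may assume that $e$ is the \emph{smallest} element in $<_b$.

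First I would perform the same lifting construction for $i=1,2$. I assign to the points labelled by $\E'$ their $\B'$-coordinates exactly as in $\P'_i$, which reproduces $\C'$ on the coordinates of $\B'$. Next I choose the $\B'$-coordinates of the point labelled $e$: for each $b'\in\B'$ separately, the order $<_{b'}$ of $\C$ prescribes a position for $e$ among the already placed values of $\E'$, and by transitivity every element below $e$ gets a strictly smaller value than every element above it, so the prescribed gap is a nonempty open interval into which I place the $b'$-coordinate of $e$. Finally I supply the coordinates in direction $b$: the points of $\E'$ receive any reals $y_f$ realizing $<_b$ restricted to $\E'$, and $e$ receives a value $y_e$ below all of them, which is legitimate precisely because $e$ is extreme (smallest) in $<_b$. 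By construction $\P_i$ satisfies $\C$, and $y_e$ may be taken arbitrarily negative.

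The sign computation is the core of the argument and rests on the cofactor expansion along the row of direction $b$. Because the variables $y_g$ occur only in that row, $\det(M(\P_i))$ is affine in the tuple $(y_g)_{g\in\E}$, and the expansion reads $\det(M(\P_i))=\sum_{g\in\E}(-1)^{\rho+\gamma_g}\,y_g\,\det\!\bigl(M_{\E\setminus\{g\},\B'}(\P_i)\bigr)$, where $\rho$ and the $\gamma_g$ are the fixed row and column indices. The coefficient of $y_e$ equals, up to the fixed sign $(-1)^{\rho+\gamma_e}$, the induced determinant $\det(M_{\E',\B'}(\P'_i))$, which is nonzero and whose sign is opposite for $i=1$ and $i=2$. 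Hence, with the remaining coordinates frozen, $\det(M(\P_i))$ is a nonconstant affine function of $y_e$ whose leading coefficient changes sign with $i$; letting $y_e\to-\infty$ therefore forces $\det(M(\P_1))$ and $\det(M(\P_2))$ to opposite signs. Choosing $y_e$ negative enough in each case produces the two required point sets, and Lemma~\ref{lem:non-fixed} concludes.

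The hard part, as I see it, is the single observation that the coefficient of $y_e$ in this expansion is exactly the induced determinant $\det(M_{\E',\B'})$: this is the bridge transferring lower-dimensional non-fixity to the full configuration, and it is the extremeness of $e$ in $<_b$ that permits driving $y_e$ to infinity without leaving the region defined by $\C$, so that this one coefficient dictates the sign of the whole determinant. The auxiliary points---feasibility of the gap for the coordinates of $e$, and realizability of an order on $\E'$ by real values---are routine density arguments that I would only mention briefly.
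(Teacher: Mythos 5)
Your proof is correct and follows essentially the same route as the paper: lift the two witnesses of non-fixity of $\C'$ given by Lemma~\ref{lem:non-fixed}, observe that the coefficient of $x_{e,b}$ in $\det(M)$ is $\pm\det(M_{\E',\B'})$, and use extremeness of $e$ in $<_b$ to drive $x_{e,b}$ to $-\infty$ so that this term dominates. The only (cosmetic) difference is that you isolate this coefficient by a cofactor expansion along the row of $b$, whereas the paper first subtracts a column and expands along the column of $e$; your explicit transitivity argument for placing $e$'s remaining coordinates is a welcome detail the paper leaves implicit.
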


\begin{proof}
To lighten notations, let us denote $\E=\{1,...,n\}$ and $\B=\{1,...,n-1\}$. Up to equivalence of configurations, we can assume that $e=1$, that $b=1$ and that $1$ is minimal in the ordering $<_1$. 

The expression of $det(M)$ by expansion with respect to $(1,2)$ yields:

\begin{minipage}[c]{1\linewidth}
\begin{eqnarray*}
det(M_{\E,\B})\ &=\ &\sum_{k=1}^{n-1}\ (-1)^{k}\ \cdot\ (x_{1,k}-x_{2,k})\ \cdot\ det\bigl(\ M_{\E\setminus\{1\},\B\setminus\{k\}}\ \bigr)
\\
  &=\ &(x_{2,1}-x_{1,1})\cdot det(M_{\E\setminus\{1\},\B\setminus\{1\}})
  +P[x_{i,j}]_{(i,j)\not=(1,1)}.
\\
\end{eqnarray*}
\end{minipage}
where 
$P[x_{i,j}]_{(i,j)\not=(1,1)}$
is a polynomial in the same variables as $M_{\E,\B}$ not depending on $x_{1,1}$. 

By hypothesis, the configuration $\C'$ 
is non-fixed, that is $\sigma_{\C'}(det(M_{\E\setminus\{1\},\B\setminus\{1\}}))=\spm$. 
By Lemma \ref{lem:non-fixed},
there exist real values $\P'_+$ and $\P'_-$ for the entries of this matrix,
that is two sets of $n-1$ points labeled by $\E\setminus \{1\}$ in dimension $n-2$,
such that $det(M_{\E\setminus\{1\},\B\setminus\{1\}}(\P'_+))>0$ and  $det(M_{\E\setminus\{1\},\B\setminus\{1\}}(\P'_-))<0$.

Let us define a set of $n$ points $\P_+$ labeled by $\E$  in dimension $n-1$  the following manner.
The formal variables in $M_{\E,\B}$ with real values specified by $\P'_+$ 
get the same values in $\P$.
All values not specified by $\P'_+$ except $x_{1,1}$ are  fixed arbitrarily but consistently with the orderings in $\C$.
The value $x_{1,1}$ is chosen small enough so that $x_{1,1}$ is minimal in $<_1$ and
$$(x_{2,1}-x_{1,1})\cdot det(M_{\E\setminus\{1\},\B\setminus\{1\}}(\P'_+))>- P[x_{i,j}]_{(i,j)\not=(1,1)}$$
This is possible since $det(M_{\E\setminus\{1\},\B\setminus\{1\}}(\P'_+))>0$ and the second term of the inequality does not depend on $x_{1,1}$. 
By this definition, we have obtained $det(M(\P_+))>0$.

Similarly, we define $\P_-$ by choosing $x_{1,1}$ small enough so that $x_{1,1}$ is minimal in $<_1$ and
$$(x_{2,1}-x_{1,1})\cdot det(M_{\E\setminus\{1\},\B\setminus\{1\}}(\P'_-))<- P[x_{i,j}]_{(i,j)\not=(1,1)}$$
This is possible since $det(M_{\E\setminus\{1\},\B\setminus\{1\}}(\P'_-))<0$ and the second term of the inequality does not depend on $x_{1,1}$. 
By this definition, we have obtained $det(M(\P_-))<0$.

We have built $\P_+$ and $\P_-$ providing opposite signs to real evaluations of $det(M_{\E,\B})$. That is, by Lemma \ref{lem:non-fixed}, 
$\C$ is non-fixed.
\end{proof}


\section{Characterizations in low dimensions}

\subsection{Results in dimension 2}
\label{sec:dim-2}

In this section we fix $n=3$ and $\E=\{A,B,C\}$.
In order to lighten notations of variables $x_{e,b}$ for $e\in\E$ and $b\in \B$, we sooner denote:
\begin{displaymath}
M=\left(\begin{array}{c c c}
1&1&1\\
x_A&x_B&x_C\\
y_A&y_B&y_C\\
\end{array}\right)
\end{displaymath}
We also denote $\B=\{x,y\}$ and $<_x,<_y$ the orderings in a configuration.


It is easy to verify that, up to equivalence of configurations, 
there exist exactly two linear ordering configurations:

\begin{minipage}[b]{0.45\linewidth}
\begin{displaymath}
\begin{array}{c}
A <_x B <_x C \\
A <_y B <_y C
\end{array}
\end{displaymath}
\end{minipage}
\begin{minipage}[b]{0.45\linewidth}
\begin{displaymath}
\begin{array}{c}
A <_x B <_x C \\
B <_y C <_y A
\end{array}
\end{displaymath}
\end{minipage}

\smallskip
\noindent 
which correspond to the following respective grid representations:

\begin{minipage}[b]{0.45\linewidth}
\centering \includegraphics[width=2.5cm]{./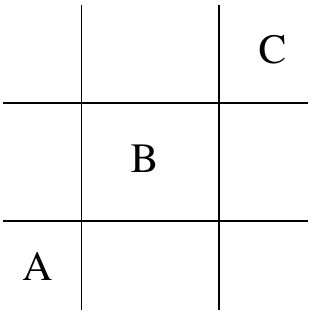}
\end{minipage}
\begin{minipage}[b]{0.45\linewidth}
\centering \includegraphics[width=2.5cm]{./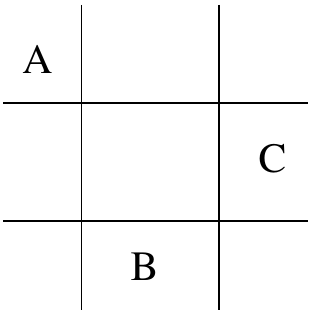}
\end{minipage}


\begin{theoreme}
\label{non-fixe2D}
Let ${\C}$ be a linear ordering configuration on $(\E,\B)$ with $n=3$, $\E=\{A,B,C\}$ and $\B=\{x,y\}$.  
The following properties are equivalent:

a) ${\C}$ is non-fixed;

b) the two orderings on $\E$ in $\C$ are either equal or equal to reversions of each other;

c) 
up to equivalence,
${\C}$ is equal to 
\begin{displaymath}
\begin{array}{c}
A <_x B <_x C \\
A <_y B <_y C
\end{array}
\end{displaymath}
\end{theoreme}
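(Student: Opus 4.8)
The plan is to exploit the fact, established immediately above the theorem, that there are exactly two linear ordering configurations up to equivalence, combined with the observation that properties (a), (b), and (c) are each invariant under equivalence of configurations. For (a) this invariance is already recorded in the preliminaries (equivalent configurations are fixed or non-fixed simultaneously); for (b) one checks directly that the property ``equal or mutually reversed'' is preserved by permuting $\B$, relabelling $\E$, and reversing individual orderings; and (c) is equivalence-invariant by its very phrasing. Consequently it suffices to verify that (a), (b), (c) hold simultaneously on the two representatives
\[
\C_1:\ A <_x B <_x C,\ A <_y B <_y C, \qquad \C_2:\ A <_x B <_x C,\ B <_y C <_y A.
\]

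For $\C_1$, property (c) holds by definition and (b) holds since its two orderings coincide, so I only need to establish (a). Here I would simply exhibit a degenerate point set: the collinear points $A=(0,0)$, $B=(1,1)$, $C=(2,2)$ satisfy $\C_1$ strictly yet give $\det(M)=0$, so condition (c) of Lemma \ref{lem:non-fixed} forces $\C_1$ to be non-fixed. Thus (a), (b), (c) all hold for $\C_1$.

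For $\C_2$, property (b) fails because $A<B<C$ is neither equal to, nor the reversion ($C<B<A$) of, the ordering $B<C<A$; and since I will show $\C_2$ is fixed while $\C_1$ is not, the two lie in distinct classes, so (c) fails as well. It then remains to prove $\C_2$ fixed, i.e. that (a) fails. For this I would invoke formal fixity: writing the determinant in the base-point-shifted form $\det(M)=(x_B-x_A)(y_C-y_B)-(x_C-x_B)(y_B-y_A)$ and reading signs off $\C_2$ (where $y_B<y_C<y_A$) yields $\plus\cdot\plus-\plus\cdot\moins=\plus-\moins=\plus\neq\indet$, so $\C_2$ is formally fixed and hence fixed by Observation \ref{obs:alg-fixed}. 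This is precisely the worked example preceding the theorem, so the computation is already in hand.

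The only genuine subtlety lies in certifying that $\C_2$ is fixed, and it splits into two points. First, Lemma \ref{non-fixe} is useless in dimension $2$: every induced configuration there lives on two labels with a single ordering and is therefore automatically fixed, so the lemma can never witness non-fixity, and a direct argument is unavoidable. Second, the naive cofactor expansion based at $A$, namely $(x_B-x_A)(y_C-y_A)-(x_C-x_A)(y_B-y_A)$, evaluates under $\C_2$ to $\plus\cdot\moins-\plus\cdot\moins=\moins-\moins=\indet$ and thus fails to certify fixity; the key is to choose instead the telescoped expansion above, whose paired differences are ordered consistently. Once the right expression is selected, everything reduces to the routine sign checks indicated.
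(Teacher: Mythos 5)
Your proposal is correct and follows essentially the same route as the paper: the same collinear witness (points on the diagonal, giving $\det(M(\P))=0$ and non-fixity via Lemma \ref{lem:non-fixed}) for the configuration with equal orderings, and the same formal-fixity certificate (the telescoped expansion $(x_B-x_A)(y_C-y_B)-(x_C-x_B)(y_B-y_A)$, which is exactly the paper's worked example and the substance of Theorem \ref{fixe2D}) for the other one. The only difference is organizational: the paper defers the fixedness of the second configuration to Theorem \ref{fixe2D}, whereas you inline that computation and also spell out the equivalence-invariance bookkeeping that the paper leaves to the reader.
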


\begin{proof}
The equivalence between b) and c) is straightforward and left to the reader.
%
Let us prove that c) implies a). Let $\C$ be given by condition c).
Let us choose $\P$ satisfying $\C$ and $x_A=y_A$,  $x_B=y_B$, $x_C=y_C$.
We have $det(M(\P))=0$, hence $\C$ is non-fixed
by Lemma \ref{lem:non-fixed}. 
%
In order to prove that a) implies c), we can equally prove that the  other possible linear ordering configuration (up to equivalence) is fixed. This result is given by Theorem \ref{fixe2D} below.
\end{proof}


%
%
%

%

\begin{theoreme}
\label{fixe2D}
Let ${\C}$ be a linear ordering configuration on $(\E,\B)$ with $n=3$, 
$\E=\{A,B,C\}$ and $\B=\{x,y\}$. 
The following properties are equivalent:

a) ${\C}$ is fixed;


b) $\C$ is formally fixed;

c) up to equivalence,
${\C}$ is equal to
\begin{displaymath}
\begin{array}{c c c}
A <_x B <_x C \\
B <_y C <_y A
\end{array}
\end{displaymath}
%
\end{theoreme}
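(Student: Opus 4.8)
The plan is to close the cycle of implications a) $\Rightarrow$ c) $\Rightarrow$ b) $\Rightarrow$ a), relying on the fact recalled just above the statement that, up to equivalence, there are exactly two linear ordering configurations on $(\E,\B)$ in this setting: the one whose two orderings are equal, and the one displayed in c).

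First I would dispatch the two easy implications. For c) $\Rightarrow$ b), I would simply invoke the worked Example above, which treats precisely this configuration (with $A,B,C$ and $x,y$ playing the roles of $a,b,c$ and $1,2$): there the formal expression $x_{B-A,x}\cdot x_{C-B,y} - x_{B-A,y}\cdot x_{C-B,x}$ of $\det(M)$ has formal sign $\plus\cdot\plus - \moins\cdot\plus = \plus$, so $\C$ is formally fixed. For b) $\Rightarrow$ a), the implication is immediate from Observation \ref{obs:alg-fixed}.

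It remains to prove a) $\Rightarrow$ c), which I would argue by elimination. Since every linear ordering configuration is equivalent to one of the two listed above, it suffices to show that the remaining one, with the two orderings equal (say $A<_x B<_x C$ and $A<_y B<_y C$), is non-fixed; a fixed $\C$ must then be equivalent to the configuration in c). To see the non-fixity, I would choose a set $\P$ satisfying it with $x_e = y_e$ for every $e\in\E$ (compatible with the two identical orderings); then the last two rows of $M(\P)$ coincide, so $\det(M(\P))=0$, and Lemma \ref{lem:non-fixed} yields non-fixity. This is exactly the c) $\Rightarrow$ a) step already carried out directly in Theorem \ref{non-fixe2D}, which I could cite instead.

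There is no genuine obstacle here; the only point needing care is logical hygiene. The proof of Theorem \ref{non-fixe2D} deferred its own a) $\Rightarrow$ c) step to the present theorem, so to avoid circularity I would make sure a) $\Rightarrow$ c) rests solely on the directly-proven non-fixity of the equal-orderings configuration (the degenerate $\P$ above), and not on the full statement of Theorem \ref{non-fixe2D}. I would also note that Lemma \ref{non-fixe} is of no use for $n=3$, since every induced configuration lives on two points in dimension one and is therefore automatically fixed, so the non-fixity of the equal-orderings case cannot be derived inductively and must come from the explicit vanishing of the determinant.
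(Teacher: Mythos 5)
Your proposal is correct and follows essentially the same route as the paper: c) $\Rightarrow$ b) by the explicit formal expression (the paper redoes the computation that your citation of the worked example supplies), b) $\Rightarrow$ a) by Observation \ref{obs:alg-fixed}, and a) $\Rightarrow$ c) by elimination, using the degenerate point set with $x_e=y_e$ to show the equal-orderings configuration is non-fixed, exactly as in the proof of Theorem \ref{non-fixe2D}. Your explicit note on avoiding circularity between the two theorems matches how the paper implicitly resolves it, so nothing further is needed.
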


\begin{proof}
Recall that b) implies a) is always true.
Let us prove that c) implies b). 
We have:

\begin{minipage}[c]{1\linewidth}
\begin{eqnarray*}
det(M) & =
& det \left(\begin{array}{c c c}
0&1&0\\
x_{A-B}&x_B&x_{C-B}\\
y_{A-B}&y_B&y_{C-B}
\end{array}\right)\\
 & = &(x_A-x_B)\cdot(y_C-y_B)-(y_A-y_B)\cdot(x_C-x_B).\\
\end{eqnarray*}
\end{minipage}
The formal sign of this expression of $det(M)$ w.r.t. $\C$ is $$\moins\cdot\plus - \plus\cdot\plus=\moins.$$
Hence $\C$ is formally fixed.


Finally, to prove that a) implies c), we just need to prove that the other linear ordering configuration (up to equivalence) is non-fixed.
This has been already shown 
in the proof of Theorem~\ref{non-fixe2D}.
\end{proof}

Now that we have listed fixed and non-fixed linear ordering configurations, we are able to determine all fixed and non-fixed configurations using Proposition \ref{prop:extension}.
Let us omit configurations for which two elements of $\E$ are comparable in no ordering in the configuration, since these configurations are obviously non-fixed. Then there remain four ordering configurations which are not linear (up to equivalence of configurations), as one can easily check:

\begin{minipage}[c]{0.22\linewidth}
\begin{displaymath}
\begin{array}{c c c}
A <_x B <_x C\\
B <_y A\\
B <_y C
\end{array}
\end{displaymath}
\end{minipage}
\begin{minipage}[c]{0.25\linewidth}
\begin{displaymath}
\begin{array}{c c c}
A <_x B <_x C\\
B <_y A\\
C <_y A
\end{array}
\end{displaymath}
\end{minipage}
\begin{minipage}[c]{0.22\linewidth}
\begin{displaymath}
\begin{array}{c c c}
A <_x B <_x C
\end{array}
\end{displaymath}
\end{minipage}
\begin{minipage}[c]{0.22\linewidth}
\begin{displaymath}
\begin{array}{c c c}
A <_x C\\
B <_x C\\
B <_y A\\
C <_y A
\end{array}
\end{displaymath}
\end{minipage}

\smallskip
\begin{minipage}[t]{0.22\linewidth}
\centering fixed \\
\end{minipage}
\begin{minipage}[t]{0.25\linewidth}
\centering  non-fixed \\
\centering {\small (because of $\scriptstyle C<_yB<_yA$, 
and implying the non-fixity of the next ones})\\
\end{minipage}
\begin{minipage}[t]{0.22\linewidth}
\centering non-fixed  \\
\end{minipage}
\begin{minipage}[t]{0.22\linewidth}
\centering non-fixed \\
\end{minipage}

\bigskip
These configurations can be represented respectively in the following grids:

\begin{minipage}[t]{0.22\linewidth}
\centering \includegraphics[width=2.5cm]{./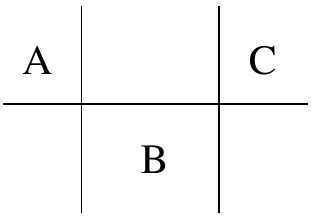}
\end{minipage}
\begin{minipage}[t]{0.25\linewidth}
\centering \includegraphics[width=2.5cm]{./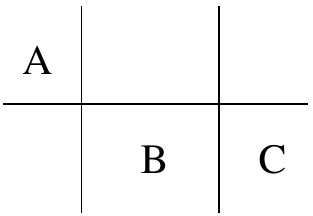}
\end{minipage}
\begin{minipage}[t]{0.22\linewidth}
\centering \includegraphics[width=2.5cm]{./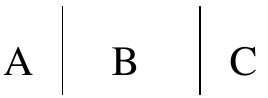}
\end{minipage}
\begin{minipage}[t]{0.22\linewidth}
\centering \includegraphics[width=2cm]{./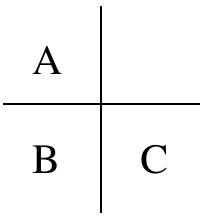}
\end{minipage}

\subsection{Results in dimension 3}
\label{sec:dim-3}

In this section we fix $n=4$ and $\E=\{A,B,C,D\}$.
In order to lighten notations of variables $x_{e,b}$ for $e\in\E$ and $b\in \B$, we sooner denote:
\begin{displaymath}
M=\left(\begin{array}{c c c c}
1&1&1&1\\
x_A&x_B&x_C&x_D\\
y_A&y_B&y_C&y_D\\
z_A&z_B&z_C&z_D
\end{array}\right)
\end{displaymath}
We also denote $\B=\{x,y,z\}$ and $<_x,<_y,<_z$ the orderings in a configuration.

\bigskip


As noted in Section \ref{sec:tools}, in order to prove that a configuration $\C$ is formally fixed by expansion, we need to find an element $e\in E$
such that all the configurations induced by $\C$ on $\E\setminus \{e\}$ are fixed. The proposition below characterizes such induced configurations.

\begin{proposition}
\label{tripletfixe}
Let ${\C}$ be a configuration on $(\E,\B)$ with $n=4$,
$\E=\{A,B,C,D\}$ and $\B=\{x,y,z\}$.
All the configurations induced by $\C$ on $\{A, B, C\}$ are fixed
if and only if 
$\C$ is equivalent to a configuration whose orderings satisfy:
\begin{displaymath}
\begin{array}{c c c c c}
B &<_x& C &<_x& A \\
C &<_y& A &<_y& B \\
A &<_z& B &<_z& C
\end{array}
\end{displaymath}
\end{proposition}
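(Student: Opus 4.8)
The plan is to reduce this three-dimensional statement entirely to the planar classification of Theorems \ref{non-fixe2D} and \ref{fixe2D}, after isolating the single combinatorial invariant that governs fixity of a configuration on three points. First I would record the following reformulation of the planar case: a linear ordering configuration on a three-element set is fixed if and only if its two orderings have distinct \emph{middle elements} (the element that is neither smallest nor largest). Indeed, by Theorem \ref{non-fixe2D} such a configuration is non-fixed exactly when its two orderings are equal or reversions of one another; reversing a linear order on three elements preserves the middle element, and conversely two orders on three elements share a middle element precisely when they are equal or reversions of each other. Hence the reversal-equivalence class of an order on $\{A,B,C\}$ is encoded by which of $A$, $B$, $C$ occupies the middle, and there are exactly three such classes.

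Next I would translate the hypothesis. The three configurations induced by $\C$ on $\{A,B,C\}$ are obtained by deleting $D$ together with one of the coordinates $x$, $y$, $z$; each uses a pair of the restricted orderings $<_x|_{ABC}$, $<_y|_{ABC}$, $<_z|_{ABC}$. By the reformulation above, all three are fixed if and only if every pair among these three restricted orderings has distinct middle elements, i.e. the three middle elements are pairwise distinct. Since they lie in the three-element set $\{A,B,C\}$, this says exactly that the middle elements of $<_x|_{ABC}$, $<_y|_{ABC}$, $<_z|_{ABC}$ are $A$, $B$, $C$ in some order. This single condition is therefore equivalent to the hypothesis of the proposition.

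It then remains to bring any such $\C$ to the displayed canonical form using the allowed equivalences, which I would do without disturbing $D$. Reversing individual orderings normalizes each restricted order within its reversal class, so up to reversions the triple of restrictions is determined solely by the bijection assigning middle elements to the coordinates; a permutation of $\B$ then reassigns the coordinates so that $<_x$, $<_y$, $<_z$ acquire middles $C$, $A$, $B$ respectively, and a final choice of reversions produces exactly $B<_x C<_x A$, $C<_y A<_y B$, $A<_z B<_z C$. The element $D$, which plays no role in any restriction to $\{A,B,C\}$, is left in place and unconstrained. For the converse I would simply read off that this canonical configuration has restricted middles $C$, $A$, $B$, hence distinct, so each induced planar configuration is fixed by Theorem \ref{fixe2D}; and since a permutation of $\B$ together with reversions carries induced configurations on $\{A,B,C\}$ to induced configurations on $\{A,B,C\}$ with the same fixity, the property is invariant under the equivalence used.

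The only genuinely delicate point is the bookkeeping in this last step: one must verify that the reduction to canonical form can be carried out with operations (permutations of $\B$ and reversions) that fix the distinguished element $D$ and the set $\{A,B,C\}$, so that the equivalence is compatible with the induced-configuration structure on which both sides of the statement depend. Everything else is a direct consequence of the planar classification together with the observation that reversal classes of orders on three points are indexed by their middle element.
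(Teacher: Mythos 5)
Your proposal is correct and follows essentially the same route as the paper: both directions reduce to the planar classification (Theorems \ref{non-fixe2D} and \ref{fixe2D}) applied to the three pairs of orderings restricted to $\{A,B,C\}$, after which only combinatorial bookkeeping remains. Your ``middle element'' invariant is simply a clean way of supplying the details that the paper's proof omits (it says the three pairwise conditions ``easily imply the result''), including the observation that only permutations of $\B$ and reversions --- leaving $D$ untouched --- are needed to reach the canonical form.
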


\begin{proof}
$\Rightarrow$)
The configuration induced by $\C$ on $(\{A,B,C\},\{y,z\})$
 is fixed.
According to Theorem~\ref{fixe2D}, this implies that 
the restrictions of $<_y$ and $<_z$ to $\{A,B,C\}$ are not equal nor are reversions of each other.
Similarly, the restrictions of $<_x$ and $<_y$, as well as the restriction of $<_x$ and $<_z$, to $\{A,B,C\}$ are not equal nor are reversions of each other.
These three properties easily imply the result. We omit the details.
$\Leftarrow$) The proof is direct by Theorem~\ref{fixe2D} using the same criteria as above.
\end{proof}


Let us now state Theorem \ref{caract}, which is the main theorem of the paper. Its proof is the content of Section \ref{sec:proof}. This proof will prove Theorem \ref{caract-non-fixed} below at the same time.

\begin{theoreme}
\label{caract}
Let ${\C}$ be a configuration on $(\E,\B)$ with $n=4$,
$\E=\{A,B,C,D\}$ and $\B=\{x,y,z\}$.
The following propositions are equivalent:

a) ${\C}$ is fixed;

b) $\C$ is formally fixed;

c) $\C$ is formally fixed by expansion;

d) up to equivalence, $\C$ satisfies:
\begin{displaymath}
\begin{array}{c c c c c}
B &<_x& C &<_x& A \\
C &<_y& A &<_y& B \\
A &<_z& B &<_z& C
\end{array}
\end{displaymath}
and there exists $X \in \{A, B, C\}$ such that 
either $X<_b D$ for every $b\in \B$, or $D<_b X$ for every $b\in \B$.

%
\end{theoreme}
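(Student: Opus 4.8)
The plan is to establish the four conditions as equivalent through a short cycle, exploiting that two of the implications are already available. Observations \ref{obs:alg-fixed} and \ref{obs:alg-fixed-proj} give (b) $\Rightarrow$ (a) and (c) $\Rightarrow$ (a) for free, so it remains to prove (d) $\Rightarrow$ (b), (d) $\Rightarrow$ (c), and (a) $\Rightarrow$ (d). Together these close the cycle: (a) $\Rightarrow$ (d) $\Rightarrow$ (b) $\Rightarrow$ (a), and (d) $\Rightarrow$ (c) $\Rightarrow$ (a). Throughout I would first invoke Proposition \ref{prop:extension} to reduce the fixity questions to linear ordering configurations, so that the orderings on $\{A,B,C\}$ in (d) are genuine linear orders and $D$ occupies a definite position in each of $<_x,<_y,<_z$.

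For the constructive implications (d) $\Rightarrow$ (b) and (d) $\Rightarrow$ (c), I would work with the canonical representative of (d): the cyclic triple $B<_xC<_xA$, $C<_yA<_yB$, $A<_zB<_zC$ together with, say, $X<_bD$ for all $b\in\B$ (the case $D<_bX$ being symmetric under reversion). The key point is that Proposition \ref{tripletfixe} guarantees that all three $2$-dimensional configurations induced on $\{A,B,C\}$ are fixed, so each cofactor $\det(M_{\{A,B,C\},\B\setminus\{b_k\}})$ has a determinate formal sign by Theorem \ref{fixe2D}. I would then take the expression of $\det(M)$ by expansion with respect to $(D,X)$: since $X<_bD$ for every $b$, all factors $x_{D,b_k}-x_{X,b_k}$ share the formal sign $\plus$, so the three signed cofactor terms receive a common sign and their sum avoids $\indet$; this yields formal fixity by expansion, i.e. (c). Substituting into each cofactor the determinate difference-of-products expression from the proof of Theorem \ref{fixe2D} then turns the whole thing into a polynomial in the variables $x_{e,b}-x_{f,b}$ with determinate formal sign, giving (b). The only real content here is checking that the signs $(-1)^{i+k+1}$ attached to the cofactors are compatible with the common sign of the differences and with the determinate signs of the cofactors; this is a finite verification on the fixed cyclic pattern.

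The substantial direction is (a) $\Rightarrow$ (d), which I would prove in contrapositive form: if a linear configuration $\C$ is not equivalent to any configuration of the form (d), then $\C$ is non-fixed, and for this Lemma \ref{non-fixe} is the workhorse. I would split according to the restriction of $\C$ to $\{A,B,C\}$. If this restriction is not the cyclic pattern, then by Proposition \ref{tripletfixe} some induced $2$-dimensional configuration on $\{A,B,C\}$, in coordinates $\{b',b''\}$ say, is non-fixed, and I would delete the third coordinate $b$ together with a suitable extreme element in order to invoke Lemma \ref{non-fixe}. If instead the triple is cyclic but the one-sidedness of $D$ fails, then every element of $\{A,B,C\}$ lies both above and below $D$ across the three coordinates, and removing $D$ always leaves a \emph{fixed} triple; here I must instead remove an element of $\{A,B,C\}$ that is extreme in a suitable coordinate, chosen so that the two surviving orderings, restricted to the remaining three elements, become equal or reversed (hence non-fixed by Theorem \ref{non-fixe2D}), and again apply Lemma \ref{non-fixe}.

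I expect the main obstacle to be exactly this last step: verifying that, whenever (d) fails, one can always locate an element that is \emph{extreme} in some coordinate and whose deletion (together with that coordinate) leaves an equal-or-reversed pair of orderings on the other three elements. The difficulty is that Lemma \ref{non-fixe} requires extremality, yet the "bad" element need not be extreme in the coordinate one would most like to delete; handling this forces a careful finite case analysis over the cyclic triple pattern and the four possible positions of $D$ in each ordering, together with the subtlety in the reduction from general to linear configurations, namely checking that the element $X$ witnessing one-sidedness can be chosen uniformly rather than varying with the linear extension. I would organize this casework by symmetry, using the $\mathbb{Z}/3$ rotational symmetry of the cyclic pattern and reversion to reduce to a small number of genuinely distinct cases.
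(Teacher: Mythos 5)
Your overall skeleton is the same as the paper's: the easy implications come from Observations \ref{obs:alg-fixed} and \ref{obs:alg-fixed-proj}, the implication (d) $\Rightarrow$ (c) is proved by expanding $det(M)$ with respect to $(D,X)$ and checking that the three cofactors $det(M_{\{A,B,C\},\B\setminus\{b\}})$ have determinate formal signs compatible with the common sign of the differences $x_{D,b}-x_{X,b}$ (this is exactly the paper's Proposition \ref{prop:proj-fixes}, direction $\Leftarrow$), and (d) $\Rightarrow$ (b) follows by substituting the two-dimensional formal expressions of Theorem \ref{fixe2D} into those cofactors. These parts are correct, and your proposed cycle of implications is logically sound.

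The genuine gap is in (a) $\Rightarrow$ (d), which is the substance of the theorem and occupies the paper's entire Section \ref{sec:proof}. Your Case 1, as sketched, would fail: if the non-fixed induced configuration lives on $(\{A,B,C\},\{b',b''\})$, then the only element whose removal produces that triplet is $D$, so Lemma \ref{non-fixe} forces the choice $e=D$ and $b$ equal to the third coordinate, and $D$ need not be extreme in $<_b$; there is no freedom to pick ``a suitable extreme element.'' You do acknowledge this obstacle, but resolving it is not the modest finite check you describe (``the four possible positions of $D$ in each ordering''), and your case split on the restriction of $\C$ to the specific triplet $\{A,B,C\}$ does not match the dichotomy that makes the analysis work. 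The paper splits instead on whether \emph{some} triplet of $\E$ carries the cyclic pattern (equivalently, by Proposition \ref{tripletfixe}, has all its induced configurations fixed) or \emph{no} triplet does. In the first case one still needs the casework on the five configurations $\C_1,\ldots,\C_5$ in the proof of Proposition \ref{prop:proj-fixes}, each dispatched by locating an extreme element together with a non-fixed induced configuration. In the second case a local, single-triplet argument cannot suffice at all: the paper's Proposition \ref{proj_non-fixe=>non-fixe} needs the hypothesis for \emph{all four} triplets simultaneously, chooses the pair of orderings maximizing the number of conformal triplets, encodes conformality in a bipartite graph between the four triplets and the three orderings, and runs a multi-case analysis on that graph (with Lemma \ref{couples} ruling out certain pairs of conformal triplets) to always produce an extreme element whose deletion leaves a non-fixed pair of orderings. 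None of this global machinery appears in your plan, so the hard direction of the theorem remains unproven; what you have is a correct strategy statement plus an accurate diagnosis of where it gets difficult, not a proof.
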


%
%
%


%

\begin{theoreme}
\label{caract-non-fixed}
Let $\C$ be a linear ordering configuration on $(\E,\B)$ with $n=4$. Then $\C$ is non-fixed if and only if conditions of Lemma \ref{non-fixe} are satisfied, that is:
there exist 
$e\in\E$ and $b\in \B$  such that
the configuration $\C'$ induced by $\C$ on $(\E\setminus \{e\},\B\setminus\{b\})$
is non-fixed and
 $e$ is extreme in the ordering $<_b$ of ${\C}$.
\end{theoreme}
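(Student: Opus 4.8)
The backward direction is immediate: it is precisely the statement of Lemma~\ref{non-fixe}. Thus the content lies in the forward direction, namely that these conditions are also \emph{necessary} when $n=4$. The plan is to prove it together with the implication $(a)\Rightarrow(d)$ of Theorem~\ref{caract}, exploiting a convenient reformulation of non-fixity in dimension~$2$. Indeed, by Theorem~\ref{non-fixe2D} a linear $2$-dimensional configuration on three labels is non-fixed exactly when its two orderings are equal or reverse to each other, and one checks at once that two linear orderings of a $3$-element set are equal or reverse precisely when they have the same middle element. Hence the hypothesis of Lemma~\ref{non-fixe} may be restated as: there exist a coordinate $b\in\B$ and a label $e$ extreme in $<_b$ such that, on $\E\setminus\{e\}$, the two remaining orderings share the same middle element.

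With this reformulation, and using Theorem~\ref{caract} to identify the non-fixed configurations as exactly those not equivalent to form~$(d)$, the forward direction reduces to the following purely combinatorial claim: every linear configuration not equivalent to~$(d)$ admits such a pair $(e,b)$. I would prove this by contraposition, assuming $(\ast)$: for every coordinate $b$ and every $e$ extreme in $<_b$, the two other orderings have \emph{distinct} middles on $\E\setminus\{e\}$, and deducing that $\C$ is equivalent to~$(d)$. Recall from Proposition~\ref{tripletfixe} (read through Theorem~\ref{fixe2D}) that a triple $S\subset\E$ is of the cyclic type required by~$(d)$ if and only if, for each of the three coordinate pairs, the two orderings have distinct middles on $S$. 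The argument then splits according to whether some triple is of this cyclic type.

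If some triple $S=\E\setminus\{D\}$ is cyclic and the remaining label $D$ fails the consistency condition of~$(d)$ (so $D$ is strictly interleaved with each vertex of $S$, lying above it in some coordinate and below it in another), I would exhibit a coordinate $b$ and a vertex $e$ of $S$ extreme in $<_b$ whose deletion leaves $D$ as the common middle of the two coordinates other than $b$, producing a witness and contradicting $(\ast)$; concretely, one tracks the position of $D$ in the three orderings of the cyclic triple and checks the finitely many placements up to the symmetry of the cyclic pattern. If instead no triple is cyclic, then every triple already carries a coordinate pair with a shared middle, and the task is to select a triple $\E\setminus\{p\}$ whose complementary point $p$ is extreme in the coordinate complementary to that pair, again contradicting $(\ast)$. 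The main obstacle is exactly this bookkeeping: in the second case one must argue that the shared-middle pairs can always be matched to an extreme complementary label, and in the first case one must verify the ``$D$ becomes a middle'' phenomenon over all interleaved placements. This finite verification is precisely what the contrapositive proof of $(a)\Rightarrow(d)$ in Theorem~\ref{caract} carries out via Lemma~\ref{non-fixe}, which is why the two theorems are established simultaneously.
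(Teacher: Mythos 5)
Your proposal is correct and takes essentially the same route as the paper: the backward direction is exactly Lemma~\ref{non-fixe}, and the forward direction is obtained simultaneously with the implication (a)$\Rightarrow$(d) of Theorem~\ref{caract}, since the case analysis there (Propositions~\ref{prop:proj-fixes} and \ref{proj_non-fixe=>non-fixe}, matching your cyclic-triple and no-cyclic-triple cases) proves non-fixity in every case by exhibiting a witness for Lemma~\ref{non-fixe}, up to equivalence of configurations, which preserves the witness property. Your restatement of the witness condition in terms of shared middle elements is a harmless repackaging of Theorem~\ref{non-fixe2D} and does not alter the argument.
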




Given the integer $n$ and the sets $\E$ and $\B$ as previously, there are $(n!)^{n-1}$ linear ordering configurations on $(\E,\B)$.
We computed the number of classes of linear ordering configurations up to equivalence from $n=2$ to $n=6$, yielding the sequence:  1, 2, 21, 5097, 71965235.
Implementation details for this computation (required for the case $n=6$) are given in \cite{these}.
%
This integer sequence  has been added to  
The On-Line Encyclopedia of Integer Sequences 
\cite{oeisf}.
%
We have no general formula: we leave as an open question to find one.

\begin{question}
Find a general formula to count, for any $n$, the number of  classes of linear ordering configurations up to equivalence.
\end{question}

We computed the  result provided by Theorem \ref{caract} to list the fixed linear ordering configurations when $n=4$. We found that there are exactly 4 fixed configurations among the 21 linear ordering configurations up to equivalence:




\begin{displaymath}
\begin{array}{c}
B <_x C <_x A <_x D\\
C <_y A <_y B <_y D\\
A <_z B <_z C <_z D
\end{array}
\hspace{1cm}
\begin{array}{c}
B <_x C <_x D <_x A\\
C <_y A <_y B <_y D\\
A <_z B <_z C <_z D
\end{array}
\end{displaymath}

\begin{displaymath}
\begin{array}{c}
B <_x D <_x C <_x A\\
C <_y A <_y B <_y D\\
A <_z B <_z C <_z D
\end{array}
\hspace{1cm}
\begin{array}{c}
B <_x C <_x D <_x A\\
C <_y D <_y A <_y B\\
A <_z B <_z C <_z D
\end{array}\ 
\end{displaymath}

The interest of the results of this section is that it provides a combinatorial characterization as well as an 
algorithm capable of deciding if a configuration is fixed or not.
%
We also need to point out that our result statements concern the fixity (or lack thereof) of the considered configuration, but not its exact $\plus$ or $\moins$ value.  This sign can be easily derived from the construction stating the fixity. This sign can also be obtained by choosing any set of points $\P$ satisfying the configuration and evaluating the sign of the real number $det(M(\P))$.
Finally, from the list of fixed linear ordering  configurations given above, one may compute the list of all fixed (partial) ordering  configurations using Proposition \ref{prop:extension}. We do not give this list here.


\subsection{Example continued}
\label{sec:ex-suite}




%
%
%

Let us apply the previous results to several configurations in the 3D model shown in Section \ref{sec:example}.
We recall that ordering configurations are represented in Figures \ref{10ptsFace} and \ref{10ptsProfil}, with $\E$  being any set of four points, and $\B$ corresponding to the three axis $\{x,y,z\}$.
\bigskip

\noindent
{\it Example 1. Fixed linear ordering configurations providing a fixed partial ordering configuration:}
{\it the configuration on $\E=\{2,5,8,9\}$  is fixed.}

   This configuration is given by the orderings:
\begin{displaymath}
\begin{array}{c}
    9 <_x 5 <_x 2 <_x 8\\

   5 <_y 8 <_y 9 <_y 2\\

    2 <_z 8 <_z 9\ \ \hbox{ and }\ \ 5 <_z 8 <_z 9\\
   \end{array}
\end{displaymath}

Its two linear extensions, respectively $\C_1$ and $\C_2$, are the following:

\begin{minipage}[c]{0.45\linewidth}
\begin{displaymath}
\begin{array}{c}
    9 <_x 5 <_x 2 <_x 8\\

    5 <_y 8 <_y 9 <_y 2\\
    2 <_z 5 <_z 8 <_z 9\\
   \end{array}
\end{displaymath}
\end{minipage}
\begin{minipage}[c]{0.45\linewidth}
\begin{displaymath}
\begin{array}{c}
    9 <_x 5 <_x 2 <_x 8\\

    5 <_y 8 <_y 9 <_y 2\\

    5 <_z 2 <_z 8 <_z 9\\
   \end{array}
\end{displaymath}
\end{minipage}

Let us write these orderings another way:

\begin{minipage}[c]{0.45\linewidth}
\begin{displaymath}
\begin{array}{c}
    2 <_z 5 <_z 8 <_z 9\\
   
    5 <_y 8 <_y 9 <_y 2\\
  
    9 <_x 5 <_x 2 <_x 8\\
   \end{array}
\end{displaymath}
\end{minipage}
\begin{minipage}[c]{0.45\linewidth}
\begin{displaymath}
\begin{array}{c}
    5 <_z 2 <_z 8 <_z 9\\

    5 <_y 8 <_y 9 <_y 2\\

    9 <_x 5 <_x 2 <_x 8\\
   \end{array}
\end{displaymath}
\end{minipage}

In this way, we see that, up to a permutation of $\B$ (that is for $\{i,j,k\}=\{x,y,z\}$)
and if we choose
$ A=9$, $B=2$, $C=8$ and $D=5$, then
the orderings in those configurations both satisfy:
\begin{displaymath}
\begin{array}{c}
    B<_i C <_iA\\

    C <_jA<_j B\\

    A <_kB <_kC\\
   \end{array}
\end{displaymath}
as required by Theorem \ref{caract}.
Moreover, for each of these orderings,
$D$ is smaller than $C$ (i.e. $5<_x8$, $5<_y8$, $5<_z8$).
Therefore, according to  Theorem \ref{caract}, those two configurations are fixed.
 It follows that  $\C$ is fixed by Proposition \ref{prop:extension}.


\bigskip

\noindent
{\it Example 2. A non-fixed ordering configuration implied by a non-fixed linear ordering configuration:}
{\it the configuration on $\E=\{1,3,7,10\}$ is non-fixed.}

It is given by the orderings:
    \begin{displaymath}
\begin{array}{c}
    7 <_x 3 <_x 10 \ \ \hbox{ and }\ \ 7 <_x 1 <_x 10\\
 
    7 <_y 3 <_y 1 \ \ \hbox{ and } \ \ 7 <_y 10 <_y 1\\
  
    1 <_z 7 <_z 10 \ \ \hbox{ and } \ \ 3 <_z 7 <_z 10\\
   \end{array}
\end{displaymath}

 One of its linear extensions is $\mathcal{C'}$:

    \begin{displaymath}
\begin{array}{c}  
    7 <_x 3 <_x 1 <_x 10\\
  
    7 <_y 10 <_y 3 <_y 1\\
  
    3 <_z 1 <_z 7 <_z 10\\
    
   \end{array}
\end{displaymath}

 The configuration induced by $\C'$ on  $(\{7,3,1\},\{x,y\})$ is
 
      \begin{displaymath}
\begin{array}{c} 
    7 <_x 3 <_x 1\\
  
   7 <_y 3 <_y 1\\
   \end{array}
\end{displaymath}
which is non-fixed by Theorem \ref{non-fixe2D}.
Since 10 is extreme in the ordering $<_z$ of configuration $\C'$,
 $\C'$ is non-fixed by Lemma \ref{non-fixe}, and so is $\C$ by Proposition \ref{prop:extension}.

\bigskip

{\it Example 3.}
We leave as an exercise to check, using Proposition \ref{prop:extension} and Theorem \ref{caract}, that the configuration on $\{1,5,9,10\}$ from Section \ref{sec:example} is fixed. There are four linear extensions to consider, up to symmetries.
\bigskip

Let us conclude by considering  the entire Section~\ref{sec:example} example with ten points in $\mathbb{R}^3$.
Since there is one configuration for each set of 4 points, there are $\binom{4}{10}=210$ configurations to study. We wrote a program to test the fixity of these configurations. For each configuration $\C$ the program lists all the linear extensions of $\C$ and computes if each linear extension is fixed or not, based on the results given in Section~\ref{sec:dim-3}. Then, Proposition~\ref{prop:extension} allows us to conclude. Finally, we find 20 fixed configurations among the 210 configurations. This highlights the significant role of these 20 configurations for 3D skull shape generic characterization, and their non-significant role for the sake of 3D skull shape comparison.

\section{Proofs of Theorem \ref{caract} and Theorem \ref{caract-non-fixed}}
\label{sec:proof}

To prove Theorem~\ref{caract}, we study separately: first, the configurations on $(\E,\B)$ with $n=4$ for which there exists a triplet of points $\E'\subseteq \E$ such that all the 
configurations induced by $\C$ on $\E'$ 
are fixed (characterized by Proposition \ref{tripletfixe}); and second, the other configurations.
The fixed configurations in the first case will be identified.
Then every other configuration in the first case and every configuration in the second case will be proved to be non-fixed, always using Lemma \ref{non-fixe}.
Hence, Theorem \ref{caract-non-fixed} will be proved in the meantime.

\subsection{If all the configurations induced on some triplet are fixed}

Recall that Proposition \ref{tripletfixe} characterizes
configurations for which all the configurations induced on some given triplet are fixed.

\begin{proposition}
\label{prop:proj-fixes}
Let $\C$ be a configuration on $(\E,\B)$ with $n=4$, $\E=\{A,B,C,D\}$ and $B=\{x,y,z\}$ such that:
\begin{displaymath}
\begin{array}{c c c c c}
B &<_x& C &<_x& A \\
C &<_y& A &<_y& B \\
A &<_z& B &<_z& C
\end{array}
\end{displaymath}
Then $\C$ is formally fixed by expansion if and only if there exists $X \in \{A, B, C\}$ such that 
either $X<_b D$ for every $b\in \B$, or $D<_b X$ for every $b\in \B$.
\end{proposition}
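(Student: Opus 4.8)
The plan is to prove both directions by working directly with the expression of $\det(M)$ by expansion with respect to a suitable pair $(e_i,e_j)$. The configuration on the triplet $\{A,B,C\}$ is the fixed one of Theorem~\ref{fixe2D} for every choice of two rows, so all three induced $3\times 3$ configurations $\C_k$ on $(\{A,B,C\},\B\setminus\{b_k\})$ are fixed with known signs; this is exactly the data needed to make the expansion with respect to $(D,X)$ (or $(X,D)$) computable. First I would expand $\det(M)$ along the column of $D$ after subtracting the column of some $X\in\{A,B,C\}$, obtaining
\begin{displaymath}
\det(M)=\sum_{k\in\{x,y,z\}}\pm\,(x_{D,k}-x_{X,k})\cdot\det\bigl(M_{\{A,B,C\},\B\setminus\{k\}}\bigr),
\end{displaymath}
where each cofactor determinant is a fixed triplet configuration with a definite formal sign $\plus$ or $\moins$ by Proposition~\ref{tripletfixe} and Theorem~\ref{fixe2D}.

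\textbf{The sufficiency direction ($\Leftarrow$).}
Assume there is $X\in\{A,B,C\}$ with $X<_b D$ for all $b$ (the case $D<_b X$ is symmetric by reversing, i.e.\ by equivalence of configurations). Then every factor $\sigma_\C(x_{D,k}-x_{X,k})$ equals $\plus$ simultaneously. The key computation is to determine the three cofactor signs for the fixed triplet orderings given in the statement and to check that, once multiplied by the cofactor signs $(-1)^{i+k+1}$ and the uniform $\plus$ from the difference factors, the three resulting terms all carry the same sign, so their formal sum avoids $\indet$. I would carry out this sign bookkeeping explicitly for the canonical representative, then invoke Observation~\ref{obs:alg-fixed-proj} to conclude that $\C$ is formally fixed by expansion (hence fixed). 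The clean way to organize the computation is to note that the three triplet cofactors are cyclic shifts of one another under the cyclic symmetry $A\to B\to C\to A$, $x\to y\to z\to x$ present in the displayed orderings; this symmetry forces their signs into a predictable pattern and handles all three choices of $X$ uniformly.

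\textbf{The necessity direction ($\Rightarrow$).}
For the converse I would argue the contrapositive: suppose no such $X$ exists, i.e.\ for every $X\in\{A,B,C\}$ the point $D$ is neither above nor below $X$ in all three coordinates simultaneously. I would show this combinatorial condition forces a row $b$ in which $D$ is extreme (smallest or largest) such that, after removing $D$ and the row $b$, the induced triplet on $\{A,B,C\}$ in the remaining two coordinates is a \emph{non-fixed} configuration. Then Lemma~\ref{non-fixe} immediately gives that $\C$ is non-fixed, which (since fixed $\Leftrightarrow$ formally fixed by expansion is what we are proving, and formally fixed by expansion always implies fixed) rules out formal fixity by expansion. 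Establishing that such an extreme row and accompanying non-fixed triplet always exist is the combinatorial heart of the argument.

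\textbf{Main obstacle.}
The hard part will be the necessity direction: translating the negation ``no $X$ has $D$ uniformly above or below it'' into the existence of a coordinate in which $D$ is extreme \emph{and} whose deletion yields a non-fixed residual triplet. This is a finite but delicate case analysis over the possible positions of $D$ relative to $A,B,C$ in each of the three cyclically-arranged orderings; the cyclic symmetry of the triplet configuration should cut the casework down substantially, but verifying that some deletion genuinely produces two coordinate orderings on $\{A,B,C\}$ that are equal or mutually reverse (the non-fixity criterion of Theorem~\ref{non-fixe2D}) is where the real work lies. I expect the sufficiency direction's sign computation to be routine once the cofactor signs are tabulated.
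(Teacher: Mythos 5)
Your sufficiency direction follows the paper's proof: expand $\det(M)$ with respect to $(D,X)$, compute the formal signs of the three triplet cofactors (they come out $\plus$ for the cofactors deleting rows $x$ and $z$, and $\moins$ for the one deleting row $y$), and check that the alternating expansion signs make all three terms agree when $D$ is uniformly above or below $X$. That computation, once written out, is exactly what the paper does.

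The necessity direction, however, contains a genuine gap, and the step you call the ``combinatorial heart'' of the argument can never be carried out. You propose to find a row $b$ in which $D$ is extreme such that deleting $D$ and row $b$ leaves a \emph{non-fixed} configuration on $(\{A,B,C\},\B\setminus\{b\})$, and then to apply Lemma~\ref{non-fixe}. But the hypothesis of the proposition pins down the restrictions of all three orderings to $\{A,B,C\}$ as $B <_x C <_x A$, $C <_y A <_y B$, $A <_z B <_z C$, and no two of these are equal or reversions of each other; hence, by Theorem~\ref{non-fixe2D}, \emph{every} configuration induced on $(\{A,B,C\},\B\setminus\{b\})$ is fixed, whichever row $b$ you delete --- this is precisely the content of Proposition~\ref{tripletfixe}. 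So Lemma~\ref{non-fixe} can never be invoked with $e=D$: the deleted point must be one of $A,B,C$, so that the residual triplet \emph{contains} $D$ and thus has a chance of being non-fixed. This is what the paper's proof does: it enumerates, up to equivalence, the five configurations compatible with the hypothesis in which no $X$ has $D$ uniformly above or below it, and for each one exhibits a point of $\{A,B,C\}$ that is extreme in some coordinate $b$ and whose deletion (together with row $b$) leaves a non-fixed triplet configuration involving $D$. For instance, for $B <_x D <_x C <_x A$, $C <_y A <_y D <_y B$, $A <_z B <_z C$ with $A <_z D$, the point $A$ is minimal in $<_z$ and the induced configuration $B <_x D <_x C$, $C <_y D <_y B$ on $(\{B,C,D\},\{x,y\})$ is non-fixed. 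Your case analysis should therefore be redirected: the finite search is over which point of $\{A,B,C\}$ to delete and in which coordinate it is extreme, not over coordinates in which $D$ is extreme.
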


\begin{proof}
$\Leftarrow$) 
Let us denote $\E'=\{A,B,C\}$ and $\C_b$ the configuration induced by $\C$ on 
$(\E',\B\setminus\{b\})$ for $b\in\B$.
The expansion of $det(M)$ w.r.t. $(D,X)$ yields:
$$det(M)=x_{D-X}\cdot det(M_{\E',\B\setminus \{x\}})
-y_{D-X}\cdot det(M_{\E',\B\setminus \{y\}})
+z_{D-X}\cdot det(M_{\E',\B\setminus \{z\}}).$$
We have 

\begin{eqnarray*}
\det(M_{\E',\B\setminus \{x\}})&=&\det\left(\begin{array}{c c c}
1&1&1\\
y_A&y_B&y_C\\
z_A&z_B&z_C
\end{array}\right)\\
 & = &\det\left(\begin{array}{c c}
 y_{B-A}&y_{C-A}\\
 z_{B-A}&z_{C-A}
 \end{array}\right)\\
  & = & (y_B-y_A)\cdot(z_C-z_A)-(z_B-z_A)\cdot(y_C-y_A).
\end{eqnarray*}
whose formal sign w.r.t. $\C_x$ is:
$$(\plus\cdot\plus) - (\plus\cdot\moins)=\plus.$$

Similarly, we have
$$\det(M_{\E',\B\setminus \{z\}})
=\det\left(\begin{array}{c c}
 x_{B-A}&x_{C-A}\\
 y_{B-A}&y_{C-A}
 \end{array}\right)
 =
(x_B-x_A)\cdot(y_C-y_A)-(y_B-y_A)\cdot(x_C-x_A)$$
whose formal sign w.r.t. $\C_z$ is 
$$(\moins\cdot\moins) - (\plus\cdot\moins)=\plus.$$

And we have
$$\det(M_{\E',\B\setminus \{y\}})=det\left(\begin{array}{c c}
 x_{B-A}&x_{C-B}\\
 z_{B-A}&z_{C-B}
 \end{array}\right)=(x_B-x_A)\cdot(z_C-z_B)-(z_B-z_A)\cdot(x_C-x_B)$$ whose formal sign w.r.t. $\C_y$ is $$(\moins \cdot \plus) - (\plus \cdot \plus)=\moins.$$
 
Now, if the formal signs of $x_{D-X}$, $y_{D-X}$ and $z_{D-X}$ are all positive (resp. negative), then the formal sign of the above expression of $det(M)$ w.r.t. $\C$ is $\plus$ (respectively $\moins$), which proves that $\C$ is formally fixed by expansion.

\smallskip

$\Rightarrow$) 
We will prove the contrapositive: we assume that there exists no $X \in \{A, B, C\}$ such that $X <_b D$ for all $b \in \B$ or such that $D <_b X$ for all $b \in \B$, and we want to prove that $\C$ is non-fixed. 
%
Equivalently, we assume that, for every $X \in \{A, B, C\}$, there exist two orderings in $\C$ such that $X$ is smaller than $D$ in an ordering and $D$ is smaller than $X$ in the other ordering. 
%
Let us consider two cases.
Observe that we will always use Lemma \ref{non-fixe} to prove that $\C$ is non-fixed.
\smallskip

\noindent{\it Case 1:} 
{\it there exist two orderings $<_i$ and $<_j$ in $\C$, for $i,j \in \B$, such that, for every $X \in \{A, B, C\}$, we have
either $X<_i D$ and $D<_j X$, or $X<_j D$ and $D<_i X$.}
With the restrictions of $\C$ to $\{A,B,C\}$  given in the hypothesis of the proposition, it is easy to check that  only 3 configurations satisfy this assumption, up to equivalence (i.e. up to some permutations of $\B$ and $\E$, and up to ordering reversions).

\begin{minipage}[c]{0.5\linewidth}
\begin{displaymath}
\begin{array}{c c c c c c c}
B &<_x& D &<_x& C &<_x& A \\
C &<_y& A &<_y& D &<_y& B \\
& A &<_z& B &<_z& C &
\end{array}
\end{displaymath}
\end{minipage}
\begin{minipage}[c]{0.5\linewidth}
\begin{displaymath}
\begin{array}{c c c c c c c}
D &<_x& B &<_x& C &<_x& A \\
C &<_y& A &<_y& B &<_y& D \\
& A &<_z& B &<_z& C &
\end{array}
\end{displaymath}
\end{minipage}

\begin{displaymath}
\begin{array}{c c c c c c c}
D &<_x& B &<_x& C &<_x& A \\
& C &<_y& A &<_y& B & \\
A &<_z& B &<_z& C &<_z& D
\end{array}
\end{displaymath}
We denote these configurations $\C_1$, $\C_2$ and $\C_3$. In these configurations, one ordering is a partial ordering: 
each of these configurations represents four linear ordering configurations.

Let us prove that $\C_1$ is non-fixed. 
First, assume that $A <_z D$. 
The configuration induced by $\C_1$ on $(\{B,C,D\},\{x,y\})$ is:
\begin{displaymath}
\begin{array}{c c c c c}
B &<_x& D &<_x& C \\
C &<_y& D &<_y& B
\end{array}
\end{displaymath}
Since this configuration is non-fixed by Theorem \ref{non-fixe2D},
and since $A$ is minimal in the ordering $<_z$, then $\C_1$ is non-fixed by Lemma~\ref{non-fixe}. 
Second, similarly, if $D <_z A$, then 
the configuration induced by $\C_1$ on $(\{A,C,D\},\{y,z\})$ is non-fixed  by Theorem \ref{non-fixe2D}:
\begin{displaymath}
\begin{array}{c c c c c}
C &<_y& A &<_y& D \\
D &<_z& A &<_z& C
\end{array}
\end{displaymath}
And $B$ is minimal in $<_x$, so $\C_1$ is non-fixed by Lemma~\ref{non-fixe}.

Now we will show that $\C_2$ is non-fixed. The proof is similar. 
Assume that $A <_z D$. As before, $A$ is minimal in the ordering $<_z$. 
The configuration induced by $\C_2$ on $(\{B,C,D\},\{x,y\})$
is non-fixed, so $\C_2$ is non-fixed by Lemma~\ref{non-fixe}.
 Assume that $D <_z A$. 
Then, the configuration induced by $\C_2$ on $(\{A,B,D\},\{x,y\})$
 is non-fixed. Since $C$ is maximal in the ordering $<_z$, we find that $\C_2$ is non-fixed by Lemma~\ref{non-fixe}.

We will use the same method to prove that $\C_3$ is non-fixed. First, assume that $C <_y D$. 
The configuration induced by $\C_3$ on $(\{A,B,D\},\{x,z\})$
is non-fixed. Since $C$ is minimal in the ordering $<_y$, then $\C_3$ is non-fixed by Lemma~\ref{non-fixe}. Secondly, if $D <_y C$ then 
the configuration induced by $\C_3$ on $(\{B,C,D\},\{y,z\})$
is non-fixed. Since $A$ is maximal in the ordering $<_x$, then $\C_3$ is non-fixed by Lemma~\ref{non-fixe}.
\smallskip

\noindent{\it Case 2:} {\it the assumption of case 1 does not hold}.
It is easy to check that, up to equivalence of configurations, there are two other configurations such that $X$ is smaller than $D$ in an ordering and $D$ is smaller than $X$ in another ordering:

\begin{minipage}[c]{0.45\linewidth}
\begin{displaymath}
\begin{array}{c c c c c c c}
D &<_x& B &<_x& C &<_x& A \\
C &<_y& A &<_y& D &<_y& B \\
A &<_z& B &<_z& D &<_z& C
\end{array}
\end{displaymath}
\end{minipage}
\begin{minipage}[c]{0.45\linewidth}
\begin{displaymath}
\begin{array}{c c c c c c c}
B &<_x& D &<_x& C &<_x& A \\
C &<_y& D &<_y& A &<_y& B \\
A &<_z& D &<_z& B &<_z& C 
\end{array}
\end{displaymath}
\end{minipage}\\
We denote these configurations $\C_4$ and $\C_5$.

We first prove
that $\C_4$ is non-fixed. 
The configuration induced by $\C_4$ on $(\{B,C,D\},\{y,z\})$ 
is non-fixed (again by Theorem \ref{non-fixe2D}), and $A$ is maximal in the ordering $<_x$. So the configuration $\C_4$ is non-fixed by Lemma~\ref{non-fixe}.

Similarly, the configuration induced by $\C_5$ on $(\{B,C,D\},\{x,y\})$ 
is non-fixed. Since $A$ is minimal in the ordering $<_z$, $\C_5$ is non-fixed by Lemma~\ref{non-fixe}.
\end{proof}

\subsection{If for every triplet there is at least one non-fixed induced configuration}


We call \emph{triplet} a set of three elements. For short, we may denote $ABC$ the triplet $\{A,B,C\}$.
Consider two linear orderings $<_i$ and $<_j$  on a same set containing the triplet $\{A,B,C\}$ and assume, without loss of generality, that $A<_i B <_i C$.
We say that $<_i$ and $<_j$  are \emph{equal}, resp. \emph{reversed}, \emph{on the triplet $\{A,B,C\}$}, if $A<_j B <_j C$, resp.  $C<_j B <_j A$. 
Recall that, in this entire section, according to Theorem \ref{non-fixe2D}, 
$<_i$ and $<_j$  are equal or reversed on $\{A,B,C\}$ if and only if 
the configuration formed by $<_i$ and $<_j$ on $(\{A,B,C\},\{i,j\})$ is non-fixed.
In what follows, we will often consider two such orderings equal or reversed on such a triplet. 
For short, we may also say that the triplet $\{A,B,C\}$ is \emph{conformal} w.r.t. the two orderings $<_i$ and $<_j$.


\begin{lemma}
\label{couples}
Let $\C$ be a configuration on $(\E,\B)$ with $n=4$, $\E=\{A,B,C,D\}$ and $\B=\{x,y,z\}$. 
Assume that 
$A <_x B <_x C <_x D$.
Consider the three pairs of triplets $\bigl\{\{A,B,C\},\{B,C,D\}\bigr\}$, $\bigl\{\{A,B,C\},\{A,C,D\}\bigr\}$ and $\bigl\{\{A,B,D\},\{B,C,D\}\bigr\}$. 
If, for at least one of these pairs, the two orderings $<_x$ and $<_y$ are
equal or reversed on both triplets of the pair, 
then $<_x$ and $<_y$ are equal or reversed on $\{A,B,C, D\}$, that is:
either $A <_y B <_y C <_y D$ or $D <_y C <_y B <_y A$.

%
%
%
\end{lemma}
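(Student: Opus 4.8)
The plan is to argue purely combinatorially about the two linear orders $<_x$ and $<_y$ on the four elements, using that $<_x$ is the chain $A<_xB<_xC<_xD$. First I would record the two structural facts on which everything rests. On one hand, each of the three displayed pairs consists of two triplets whose intersection has exactly two elements: the shared pairs are $\{B,C\}$, $\{A,C\}$ and $\{B,D\}$ respectively, and on each of them $<_x$ has a definite order ($B<_xC$, $A<_xC$, $B<_xD$). On the other hand --- and this is the key combinatorial observation --- for each of the three pairs, every \emph{consecutive} pair of the $<_x$-chain, namely $\{A,B\}$, $\{B,C\}$, $\{C,D\}$, is contained in at least one of the two triplets. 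This is a direct three-case check: for $\{\{A,B,C\},\{B,C,D\}\}$ the first triplet contains $\{A,B\},\{B,C\}$ and the second contains $\{B,C\},\{C,D\}$; for $\{\{A,B,C\},\{A,C,D\}\}$ the first contains $\{A,B\},\{B,C\}$ and the second contains $\{C,D\}$; and for $\{\{A,B,D\},\{B,C,D\}\}$ the first contains $\{A,B\}$ and the second contains $\{B,C\},\{C,D\}$.

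With these facts in hand I would fix one pair $\{T_1,T_2\}$ satisfying the hypothesis and first eliminate the ``mixed'' situation. Here I would keep explicit the dictionary that $<_y$ \emph{equals} $<_x$ on a triplet iff the two orders agree on every pair inside it, and $<_y$ is the \emph{reverse} of $<_x$ on the triplet iff they disagree on every such pair. Writing $\{u,v\}=T_1\cap T_2$ with $u<_xv$, if $<_y$ were equal to $<_x$ on one of $T_1,T_2$ and reversed on the other, then the restriction of $<_y$ to $\{u,v\}$ would have to be simultaneously $u<_yv$ and $v<_yu$, which is impossible since $<_y$ is a single linear order. Hence the two triplets are necessarily of the same type: either $<_y$ equals $<_x$ on both, or $<_y$ is the reverse of $<_x$ on both.

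Finally I would conclude using the consecutive-pair covering. In the ``equal on both'' case, $<_y$ agrees with $<_x$ on each triplet, hence on each of $\{A,B\}$, $\{B,C\}$, $\{C,D\}$ (each being contained in some triplet); thus $A<_yB$, $B<_yC$ and $C<_yD$, and transitivity of $<_y$ forces $A<_yB<_yC<_yD$. Symmetrically, in the ``reversed on both'' case I obtain $B<_yA$, $C<_yB$ and $D<_yC$, whence $D<_yC<_yB<_yA$. In either case $<_x$ and $<_y$ are equal or reversed on $\{A,B,C,D\}$, as claimed. I do not expect a genuine obstacle here: the whole content is the consecutive-pair covering observation, after which the argument is a short uniform case split. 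The one point to handle with care is precisely the equal/reverse dictionary, so that the shared pair $\{u,v\}$ really does produce the contradiction that kills the mixed case; once that is stated cleanly, nothing else is needed.
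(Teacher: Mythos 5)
Your proof is correct, and its first half coincides exactly with the paper's: both eliminate the ``mixed'' situation (equal on one triplet, reversed on the other) by observing that the two triplets of any listed pair share two elements $\{u,v\}$, on which a single linear order $<_y$ cannot simultaneously agree and disagree with $<_x$. Where you diverge is in the second half. The paper, having reduced to ``equal on both'', treats the three pairs one at a time --- for $\{\{A,B,C\},\{B,C,D\}\}$ it reads off $A<_yB<_yC$ from the first triplet and $C<_yD$ from the second, and similarly for the other two pairs --- and then disposes of the ``reversed on both'' case by a separate trick: pass to the reversion $<_{opp(y)}$ of $<_y$, apply the ``equal'' case to it, and reverse back. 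You instead isolate a single covering observation --- for each of the three pairs, every consecutive pair $\{A,B\}$, $\{B,C\}$, $\{C,D\}$ of the $<_x$-chain lies inside one of the two triplets --- which lets you conclude by transitivity in one stroke, and which applies verbatim to the reversed case (with all inequalities flipped) without needing the reversion device. Your organization is more uniform and slightly more economical; the paper's is more explicit about how each individual pair of triplets forces the order, at the cost of repeating essentially the same computation three times plus a symmetry argument. Both are complete proofs of the lemma.
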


\begin{proof}
First, we prove that if two orderings $<_i$ and $<_j$ are equal or reversed on two triplets of $\E$, then  they are either equal on the two triplets, or reversed on the two triplets.
%
These two triplets have two elements  in common. We denote these triplets $XYZ$ and $XYW$. Up to relabelling, we can assume that $X <_i Y$. 
Assume for a contradiction that $<_i$ and $<_j$ are equal on $XYZ$ and reversed on $XYW$.
 The triplet $XYZ$ shows that $X <_j Y$, whereas $XYW$ shows that $Y <_j X$,
 which is a contradiction.

Let us turn back to the pairs of triplets $\{ABC,BCD\}$, $\{ABC,ACD\}$ and $\{ABD,BCD\}$. 
Let us first consider the pair $\{ABC,BCD\}$. 
Assume that $<_x$ and $<_y$ are equal on $ABC$ and $BCD$. We prove that $<_x$ and $<_y$ are equal on $\E$.
We have $A <_x B <_x C$ by the lemma's hypothesis, and $<_x$ and $<_y$ equal on $ABC$ by assumption, so we have $A <_y B <_y C$.
We have $C <_x D$ by hypothesis, 
and $<_x$ and $<_y$ equal on $BCD$ by assumption,
so we have $C <_y D$. 
Hence, the ordering $<_y$ is $A <_y B <_y C <_y D$. 
If we consider the pair $\{ABC,ACD\}$, then we obtain the same result with the same proof.
Consider the pair $\{ABD,BCD\}$, and assume that $<_x$ and $<_y$ are equal on $ABD$ and $BCD$.  Then the ordering $<_y$ restricted to $ABD$ is $A <_y B <_y D$. The orderings $<_x$ and $<_y$ are also equal on $BCD$. So we have $B <_y C <_y D$. This proves that we have $A <_y B <_y C <_y D$.

Now, assume that there is a pair of triplets such that 
$<_x$ and $<_y$ are reversed on the two triplets of the pair.
Let us denote $<_{opp(y)}$ the reversion of the ordering $<_y$. 
Then $<_x$ and $<_{opp(y)}$ are equal on the two triplets of the pair.
From the result above, we get $A <_{opp(y)} B <_{opp(y)} C <_{opp(y)} D$, 
that is  $D <_y C <_y B <_y A$.
\end{proof}


\begin{proposition}
\label{proj_non-fixe=>non-fixe}
Let $\C$ be a configuration on $(\E,\B)$ with $n=4$. If 
for every triplet
$\E' \subseteq \E$ there exists $b \in \B$ such that 
the configuration induced by $\C$ on $(\E',\B\setminus\{b\})$
is non-fixed, then $\C$ is non-fixed.
\end{proposition}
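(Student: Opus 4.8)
The goal is to show that if every triplet $\E'\subseteq\E$ has at least one non-fixed induced configuration, then the full configuration $\C$ on four points is non-fixed. By Proposition~\ref{prop:extension} we may assume $\C$ is a linear ordering configuration. The only tool we are given for proving non-fixity is Lemma~\ref{non-fixe}: it suffices to exhibit an element $e\in\E$ and a coordinate $b\in\B$ such that $e$ is extreme in $<_b$ and the induced configuration on $(\E\setminus\{e\},\B\setminus\{b\})$ is non-fixed. So the entire plan reduces to finding such a pair $(e,b)$ under the hypothesis.

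**The reduction and case normalization.** Recall from Theorem~\ref{non-fixe2D} that a triplet $\{X,Y,Z\}$ with coordinates $i,j$ gives a non-fixed induced configuration precisely when $<_i$ and $<_j$ are equal or reversed on that triplet, i.e. when the triplet is \emph{conformal} w.r.t. $<_i,<_j$. Thus the hypothesis says: for each of the four triplets, some pair of coordinates is conformal on it. I would first fix notation by choosing, up to equivalence, $A<_x B<_x C<_x D$; every linear order is some permutation, so after relabelling this is free. Now I want to locate an extreme element $e$ (in some coordinate $b$) whose removal leaves a non-fixed $3\times 2$ configuration. The natural candidates for $e$ are the elements extreme in $<_x$, namely $A$ and $D$. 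If I remove $D$, I need the triplet $\{A,B,C\}$ to be conformal w.r.t. the two remaining coordinates $y,z$; if I remove $A$, I need $\{B,C,D\}$ conformal w.r.t.\ $y,z$. The difficulty is that the hypothesis only guarantees conformality w.r.t.\ \emph{some} coordinate pair, which may involve $x$, whereas Lemma~\ref{non-fixe} after removing the $x$-extreme point $A$ or $D$ leaves only the pair $\{y,z\}$.

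**Where Lemma~\ref{couples} enters, and the main obstacle.** This is exactly the gap that Lemma~\ref{couples} is designed to close: it propagates conformality of $<_x,<_y$ from triplets to the whole set $\{A,B,C,D\}$, and hence to \emph{every} triplet including those I need. The plan is therefore a case analysis on which coordinate pairs are conformal on which triplets. For each triplet $T$ the hypothesis hands us one conformal pair among $\{x,y\},\{x,z\},\{y,z\}$. I would argue that if for some triplet the conformal pair is $\{y,z\}$ and that triplet is of the form $\{A,B,C\}$ or $\{B,C,D\}$, then I can directly remove the $x$-extreme point $D$ or $A$ and invoke Lemma~\ref{non-fixe}, finishing. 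The problematic case is when the only conformality available on the triplets I care about involves the $x$-coordinate. In that situation I would apply Lemma~\ref{couples}: having $<_x$ conformal with $<_y$ (say) on two triplets sharing the right pattern forces $<_x$ and $<_y$ to be equal or reversed on all of $\{A,B,C,D\}$, so that $<_y$ is itself a total order equal to $<_x$ or its reverse; then every triplet is conformal w.r.t. $\{x,y\}$, and in particular removing any element extreme in $<_z$ yields a non-fixed pair on $\{x,y\}$, closing that branch.

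**Assembling the argument.** Concretely I would enumerate, for each triplet, the possible conformal coordinate pair, exploit the symmetry among the three coordinates (permutations of $\B$) and among elements to cut down the cases, and in each surviving case either exhibit an extreme element whose deletion leaves a $\{y,z\}$-, $\{x,z\}$-, or $\{x,y\}$-conformal triplet, or use Lemma~\ref{couples} to upgrade partial conformality to global conformality and then pick an extreme element in the third coordinate. The main obstacle is organizing the case distinction so that it is genuinely exhaustive while keeping it short: one must verify that there is no configuration in which the conformal pairs are ``misaligned'' across triplets in a way that leaves no extreme-element/non-fixed-triplet match and no pair on which Lemma~\ref{couples} applies. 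I expect that a careful bookkeeping, using the three designated pairs of triplets in Lemma~\ref{couples} together with the freedom to permute coordinates, shows that such a misaligned configuration cannot occur, so that Lemma~\ref{non-fixe} always applies and $\C$ is non-fixed.
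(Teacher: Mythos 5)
Your overall strategy coincides with the paper's in outline (translate the hypothesis into conformal triplets via Theorem \ref{non-fixe2D}, hunt for a pair $(e,b)$ feeding Lemma \ref{non-fixe}, use Lemma \ref{couples} to upgrade conformality), but the step you defer to ``careful bookkeeping'' is precisely where the proof lives, and your guiding expectation there is false: ``misaligned'' configurations \emph{do} occur. Concretely, consider
\begin{displaymath}
A <_x B <_x C <_x D, \qquad A <_y B <_y D <_y C, \qquad B <_z A <_z C <_z D.
\end{displaymath}
Here $\{A,B,C\}$ and $\{A,B,D\}$ are conformal w.r.t.\ $\{<_x,<_y\}$ and w.r.t.\ no other pair, while $\{A,C,D\}$ and $\{B,C,D\}$ are conformal w.r.t.\ $\{<_x,<_z\}$ and no other pair. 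So the hypothesis of the proposition holds, yet no triplet is conformal w.r.t.\ $\{<_y,<_z\}$ (so your first branch -- remove an $x$-extreme point and use the remaining pair $\{y,z\}$ -- is empty), and the two pairs of triplets sharing a conformal pair of orderings, namely $\{ABC,ABD\}$ and $\{ACD,BCD\}$, are exactly the pairs to which Lemma \ref{couples} does \emph{not} apply (so your second branch is empty as well). Your plan therefore cannot conclude on this configuration, although it must: it is non-fixed.

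The missing idea is the one the paper deploys for exactly this situation (its ``Case 3'', organized by first choosing the pair of orderings conformal on the largest number of triplets and then analyzing a bipartite graph between triplets and orderings): extremality can be \emph{propagated} from $<_x$ to another ordering through conformality. In the example, $D$ is maximal in $<_x$; since $\{A,C,D\}$ and $\{B,C,D\}$ are conformal w.r.t.\ $\{<_x,<_z\}$, the ordering $<_z$ restricted to each of these triplets equals or reverses $<_x$, so $D$ is extreme in $<_z$ on both, hence extreme in $<_z$ on all of $\E$; now Lemma \ref{non-fixe} applies with $e=D$, $b=z$, because the remaining triplet $\{A,B,C\}$ is conformal w.r.t.\ $\{<_x,<_y\}$ -- a conformal pair \emph{containing} $x$, a possibility your two branches exclude. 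Without this propagation argument (and the maximality device that makes the case analysis exhaustive), your case distinction cannot be closed. A secondary, smaller issue: your opening reduction ``by Proposition \ref{prop:extension} we may assume $\C$ is linear'' is not immediate, because the hypothesis -- every triplet has some non-fixed induced configuration -- is not obviously inherited by any single linear extension of $\C$ (non-fixedness of a partial induced configuration only guarantees that \emph{some} of its linear extensions is non-fixed, not the one induced by your chosen extension of $\C$).
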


\begin{proof}
We consider separate cases.
To this aim, for each ordering $<_b$ in $\C$, we count the number of non-fixed configurations induced by $\C$ on  $(\E',\B\setminus\{b\})$ 
with $\E'=\E\setminus\{e\}$ for some $e\in\E$.
By Theorem \ref{non-fixe2D}, it amounts to counting, 
for each pair of orderings in $\C$ (i.e. for each $\B\setminus\{b\})$), 
the number of triplets $\E'$ in $\{ABC,ABD,ACD,BCD\}$ which are conformal w.r.t. this pair of orderings. 
Without loss of generality (up to a permutation of $\B$),
we can assume that the pair or orderings $\{<_x,<_y\}$ maximizes this number of triplets.
Also, without loss of generality (up to a permutation of $\E$),
we can assume that $A <_x B <_x C <_x D$.
The different cases correspond to the number of triplets of $\E$ which are conformal w.r.t. $<_x$ and $<_y$.
\medskip

\noindent{\it Case 1: four triplets of $\E$ are conformal w.r.t. $<_x$ and $<_y$.}

All the triplets of $\E$ are conformal w.r.t. $<_x$ and $<_y$. Therefore, we have $A <_y B <_y C <_y D$ or $D <_y C <_y B <_y A$ by Lemma~\ref{couples}. 
Let $X \in \E$ be an extreme element in the ordering $<_z$. The triplet $\E'=\{A,B,C,D\}\setminus \{X\}$ is conformal w.r.t. $<_x$ and $<_y$. That is, using Theorem \ref{non-fixe2D}: the configuration induced by $\C$ on $(\E',\{x,y\})$ is non-fixed. Hence, $\C$ is non-fixed by Lemma~\ref{non-fixe}.
\medskip

\noindent{\it Case 2: three triplets of $\E$ are conformal w.r.t. $<_x$ and $<_y$.}

We prove that this case is not possible.
Otherwise,
the set of three triplets contains one of the pairs $\{ABC,BCD\}$, $\{ABC,ACD\}$, $\{ABD,BCD\}$.
Then we have $A <_y B <_y C <_y D$ or $D <_y C <_y B <_y A$ by Lemma~\ref{couples}. Therefore, the four triplets are conformal w.r.t. $<_x$ and $<_y$.
\medskip

\noindent{\it Case 3: two triplets of $\E$ are conformal w.r.t. $<_x$ and $<_y$.}


This is the most critical case.
First, we build a bipartite graph $G$ between the four triplets of $\E$ and the three orderings of $\C$. 
Let us denote $T_1,\ T_2,\ T_3,\ T_4$ those triplets, and   $O_1,\ O_2,\ O_3$ those orderings (we will study later in our case analysis which label corresponds to which triplet/ordering).
Thus, the edges are of the form $(T,O)$ with $T$ in $\{T_1,\cdots,T_4\}$ and $O$ in $\{O_1,O_2,O_3\}$. 
The graph $G$ is defined the following way:
if a triplet $T$ is conformal w.r.t. the two orderings $O_a$ and $O_b$, for $a,b\in\{1,2,3\}$,  then
both $(T,O_a)$ and $(T,O_b)$ are edges of $G$;
and every edge $(T,O_a)$ in $G$ means that there exists $O_b$ such that $T$ is conformal w.r.t.  $O_a$ and $O_b$.
%

Let us state several useful claims. 

\begin{enumerate}

\item 
\label{claim-iff}
\emph{We have: $(T,O_a)$ and $(T,O_b)$ are edges of $G$ if and only if
$T$ is conformal w.r.t. the two orderings $O_a$ and $O_b$.}
Indeed, if $(T,O_a)$ and $(T,O_b)$ are edges, but $T$ is not conformal w.r.t. the two orderings $O_a$ and $O_b$, then there exists an edge $(T,O_c)$ in $G$, with
$T$  conformal w.r.t. the two orderings $O_a$ and $O_c$, and $T$ conformal w.r.t. $O_b$ and $O_c$. This directly implies that the orderings $O_a$ and $O_b$ are equal or reversed on $T$, that is $T$ is conformal w.r.t. $O_a$ and $O_b$.

\item
\label{claim-ij}
\emph{We have: the two vertices $<_x$ and $<_y$ of $G$ share exactly two neighbors.} Indeed, exactly two triplets are conformal w.r.t. $<_x$ and $<_y$ by assumption.

\item
\label{claim-pairs}
\emph{We have: if $<_x$ and another vertex in $\{O_1,O_2,O_3\}$ are 
adjacent in $G$ to the same two vertices in $\{T_1,\cdots,T_4\}$, 
then these two vertices in $\{T_1,\cdots,T_4\}$ form one of these pairs: $\{ABC,ABD\}$, or $\{ABD,ACD\}$, or $\{ACD,BCD\}$.}
Indeed, if two triplets of $\E$ are conformal w.r.t. the same two orderings $<_x$ and $<_a$ in $\C$, for $a\in\{y,z\}$,
then these two triplets are not $\{ABC,BCD\}$, nor $\{ABC,ACD\}$, nor $\{ABD,BCD\}$: otherwise, by Lemma \ref{couples}, the four triplets of $\E$ would be
conformal w.r.t. the two orderings, which would contradict the maximality property of $<_x$ and $<_y$ and the assumption of Case 3. 


\item
\label{claim-connected}
\emph{We have: in the graph $G$, each vertex in $\{T_1,\cdots,T_4\}$ is not isolated, and is adjacent to at least two vertices in $\{O_1,O_2,O_3\}$.}
Indeed, by hypothesis of the proposition,
for every triplet $T$ of $\E$, there exists $b\in\B$ such that
the configuration induced by $\C$ on $(T,\B\setminus\{b\})$ is non fixed,
that is, as in Theorem \ref{non-fixe2D}, such that $T$ is conformal w.r.t. the two orderings in $\C$ different from $<_b$.
%

\item
\label{claim-8}
\emph{We have: there are at least eight edges in $G$.} This is obtained directly from Claim \ref{claim-connected} above.

\item
\label{claim-degree}
\emph{We have: all the vertices in $\{O_1,O_2,O_3\}$ have degree at least two.}
Indeed, assume that a vertex in $\{O_1,O_2,O_3\}$ has degree zero or one, then, among the two other vertices in $\{O_1,O_2,O_3\}$, one has degree four and the other has a degree at least equal to three. Hence, those two orderings are adjacent to three common triplets. 
We get a contradiction with  the definition of $<_x$ and $<_y$, because this pair has been assumed to maximize the number of triplets conformal w.r.t. it (and we assumed that this number was equal to two).

\end{enumerate}


By Claim \ref{claim-8} and Claim \ref{claim-degree}, we have that: either there is at least one vertex of $G$ in $\{O_1,O_2,O_3\}$ with degree 4, or there are at least two vertices of $G$ in $\{O_1,O_2,O_3\}$ with degree 3. 
Figures~\ref{biparti4} and \ref{biparti33} show respectively the two  bipartite graphs with a minimal number of edges and satisfying one of these two properties, up to permutations of $\{T_1,...,T_4\}$ and $\{O_1,O_2,O_3\}$. 
So we now assume, without loss of generality, that removing edges from $G$ leads to one the two graphs depicted in Figures~\ref{biparti4} and \ref{biparti33}.



\begin{figure}[!ht]
\hfil
\begin{minipage}[t]{0.35\linewidth}
\centering \includegraphics[height=4.5cm]{./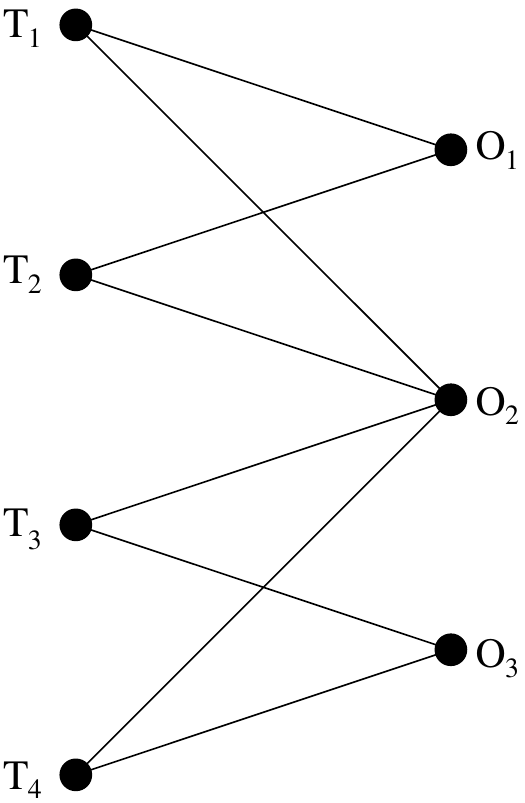}
\caption{$G$ minimal with one vertex having degree 4}\label{biparti4}
\end{minipage}\hfil
\begin{minipage}[t]{0.35\linewidth}
\centering \includegraphics[height=4.5cm]{./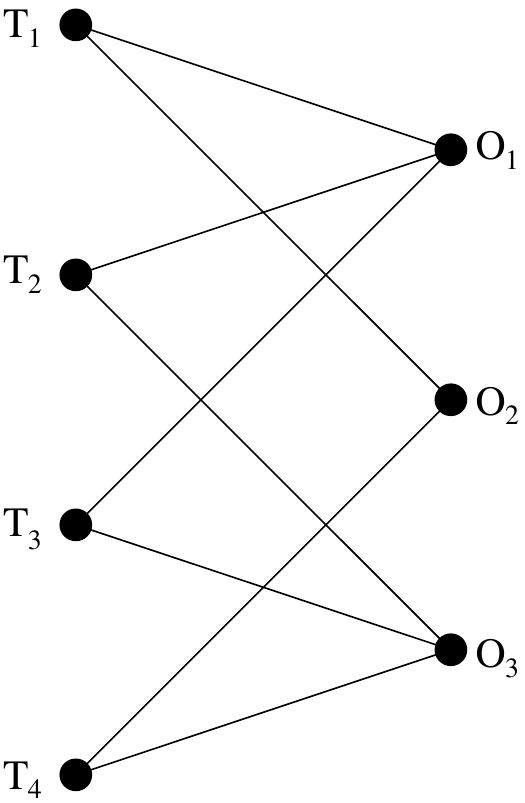}\\
\caption{$G$ minimal with two vertices having degree 3}\label{biparti33}
\end{minipage}
\hfil
\end{figure}

%

Let us consider the two cases separately, according to these two properties.

\begin{itemize}

\item $G$ has at least one vertex with degree 4 among $\{O_1,O_2,O_3\}$. 
See Figure \ref{biparti4}.
\smallskip

\begin{figure}[!ht]
\hfil
\begin{minipage}[t]{0.35\linewidth}
\centering \includegraphics[height=4.5cm]{./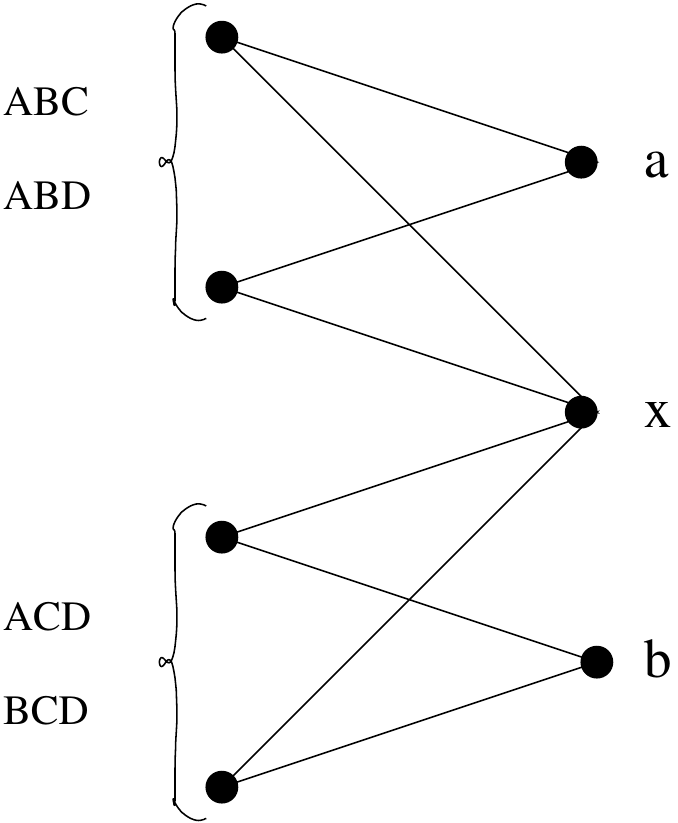}
\caption{$O_2$ is $<_x$}\label{biparti4-ji}
\end{minipage}
\hfil
\begin{minipage}[t]{0.35\linewidth}
\centering \includegraphics[height=4.5cm]{./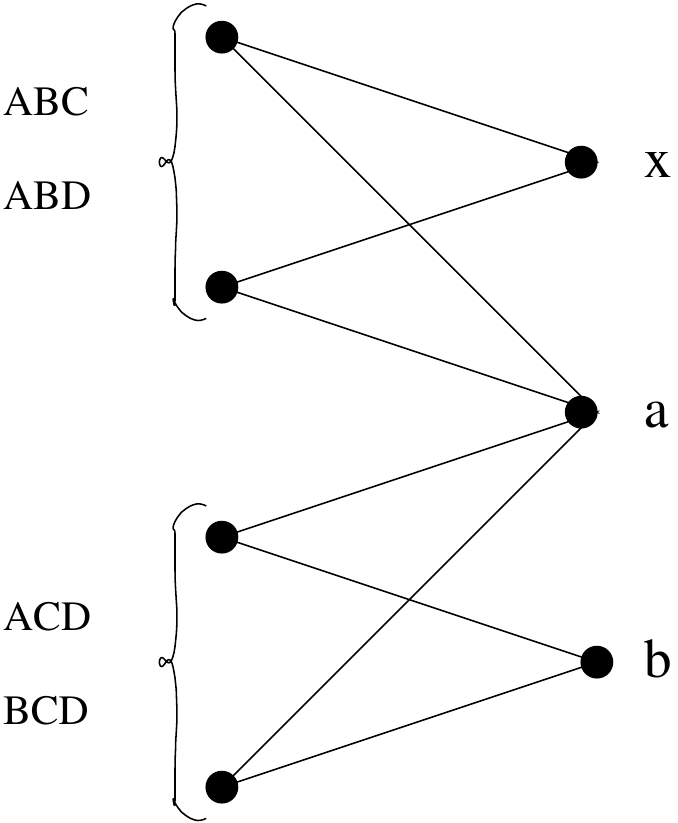}
\caption{$O_1$ is $<_x$}\label{biparti4-ij}
\end{minipage}\hfil
\end{figure}



First, assume that the ordering $<_x$ is the vertex $O_2$ with degree four. 
%
%
In this case, the two triplets $T_1$ and $T_2$ are adjacent to the same two orderings $O_1$ and $<_x$. By Claim \ref{claim-pairs} above, this implies that
$T_1$ and $T_2$ form one of these pairs: $\{ABC,ABD\}$, $\{ABD,ACD\}$, $\{ACD,BCD\}$.
The same result holds for the two triplets $T_3$ and $T_4$ adjacent to $O_3$ and $<_x$.
Since these four triplets are distinct, $\{T_1,T_2\}$ and $\{T_3,T_4\}$ are not equal to $\{ABD,ACD\}$.
Without loss of generality, we can assume that $\{T_1,T_2\}=\{ABC,ABD\}$ and $\{T_3,T_4\}=$ $\{ACD,$ $BCD\}$. 



%
%
%
%

Let us denote $a,b\in\{y,z\}$ such that $<_a$ is $O_1$,
and $<_b$ is $O_3$.
This case is illustrated in Figure~\ref{biparti4-ji}. 
%
%
%
In this case, $ABC$ is adjacent in $G$ to $<_x$ and $<_a$. 
Hence, based on Claim \ref{claim-iff}, 
the configuration induced by $\C$ on $(ABC,\{x,a\})$ is non-fixed.
On the other hand, 
 $ACD$ is adjacent to $<_x$ and $<_b$ in $G$, meaning that 
 $<_x$ and $<_b$ are equal or reversed on $ACD$. Since $D$ is extreme in $<_x$, then $D$ is extreme in $<_b$ restricted to $ACD$.
 Similarly, $BCD$ is adjacent to $<_x$ and $<_b$ in $G$ implies that
 $D$ is extreme in $<_b$ restricted to $BCD$. Hence $D$ is extreme in $<_b$. So we find that $\C$ is non-fixed by Lemma~\ref{non-fixe}.

Second, assume that the ordering $<_x$ is the vertex $O_1$ or $O_3$.
Without loss of generality, since $O_1$ and $O_3$ play symmetric roles in $G$, we can assume that $<_x$ is $O_1$.
By Claim~\ref{claim-pairs}, this implies that $T_1$ and $T_2$ form one of these pairs: $\{ABC,ABD\}$, $\{ABD,ACD\}$, $\{ACD,BCD\}$.

We will prove that $\{T_1,T_2\}$ is not equal to $\{ABD,ACD\}$. 
Assume, for a contradiction, that $\{T_1,T_2\}$ is  equal to $\{ABD,ACD\}$.
Let $a,b\in\{y,z\}$ such that $O_2$ is $<_a$ and $O_3$ is $<_b$.
Then, $(ABD,<_x)$ and $(ABD,<_a)$ are edges of $G$. By  Claim~\ref{claim-iff}, $ABD$ is conformal w.r.t. $<_x$ and $<_a$. Since the ordering $<_x$ restricted to $ABD$ is $A <_x B <_x D$, the ordering $<_a$ restricted to $ABD$ is either $A <_a B <_a D$ or $D <_a B <_a A$. Similarly, with the triplet $ACD$, we have either $A <_a C <_a D$ or $D <_a C <_a A$. 
If $A <_a B<_a C <_a D$ or $D <_a C<_a B <_a A$, then the four triplets are conformal w.r.t. $<_x$ and $<_a$, which is a contradiction with the choice of $<_x$ and $<_y$ (they maximize the number of triplets conformal with a pair of orderings), and the assumption of Case 3 (this maximal number equals two).
Therefore, we have either $A <_a C <_a B <_a D$ or $D <_a B <_a C <_a A$. 
On the other hand, the pair $\{T_3,T_4\}$ is equal to $\{ABC,BCD\}$. Since $ABC$ is conformal w.r.t. $<_a$ and $<_b$, the ordering $<_b$ restricted to $ABC$ is either $A <_b C <_b B$ or $B <_b C <_b A$. Similarly, with the triplet $BCD$ we have either $C <_b B <_b D$ or $D <_b B <_b C$. So we have either $A <_b C <_b B <_b D$ or $D <_b B <_b C <_b A$. Hence, the four triplets are conformal w.r.t. $<_a$ and $<_b$,
which is, similarly to above, a contradiction with the maximality property of $<_x$ and $<_y$.
So $\{T_1,T_2\}$ is not equal to $\{ABD,ACD\}$.

If $\{T_1,T_2\}$ is equal to $\{ABC,ABD\}$. Figure~\ref{biparti4-ij} illustrates this case. We observe that $BCD$ is conformal w.r.t. $<_y$ and $<_z$, meaning that the configuration induced by $\C$ on $(BCD,\{y,z\})$ is non-fixed. Since we have $A <_x B <_x C <_x D$, $A$ is minimal in $<_x$. Hence, the configuration $\C$ is non-fixed by Lemma~\ref{non-fixe}. If $\{T_1,T_2\}$ is equal to $\{ACD,BCD\}$. We observe that $ABC$ is conformal w.r.t. $<_y$ and $<_z$,
meaning that the configuration induced by $\C$ on $(ABC,\{y,z\})$ is non-fixed. Since we have $A <_x B <_x C <_x D$, $D$ is maximal in $<_x$. Hence, the configuration $\C$ is non-fixed by Lemma~\ref{non-fixe}.

\item
The graph $G$ has at least 2 vertices with degree 3 among $\{O_1,O_2,O_3\}$. See Figure \ref{biparti33}.
\smallskip

By Claim \ref{claim-ij}, $<_x$ and $<_y$ have exactly two common neighbors. We first prove that, without loss of generality, we can assume that $O_1$ is $<_x$, and $O_3$ is $<_y$. First, assume that $O_2$ is $<_x$ or $<_y$ and $O_2$ has degree two.
Then either $O_1$ or $O_3$ has degree four (in order to share two neighbors with $O_2$), implying that $O_1$ and $O_3$ have three common neighbors, which is a contradiction with the maximality property of $<_x$ and $<_y$ and the assumption of Case 3.
So, if $O_2$ has degree two, then we necessarily have $\{<_x,<_y\}=\{O_1,O_3\}$.
Now, if $O_2$ has degree at least three, then we can exchange $O_2$ with $O_1$ or $O_3$ (and some triplets accordingly) in order to have $\{<_x,<_y\}=\{O_1,O_3\}$. Finally, we have $\{<_x,<_y\}=\{O_1,O_3\}$, and  $O_1$ and $O_3$ play symmetric roles in $G$, so we can choose to have $O_1=<_x$ and $O_3=<_y$.


By Claim \ref{claim-pairs}, there are 2 cases: either the vertices $T_2$ and $T_3$ form the pair $\{ACD,BCD\}$, or they form a pair among $\{ABC,ABD\}$ and $\{ABD,ACD\}$.


\begin{figure}[!ht]
\hfil
\begin{minipage}[t]{0.45\linewidth}
\centering \includegraphics[height=4.5cm]{./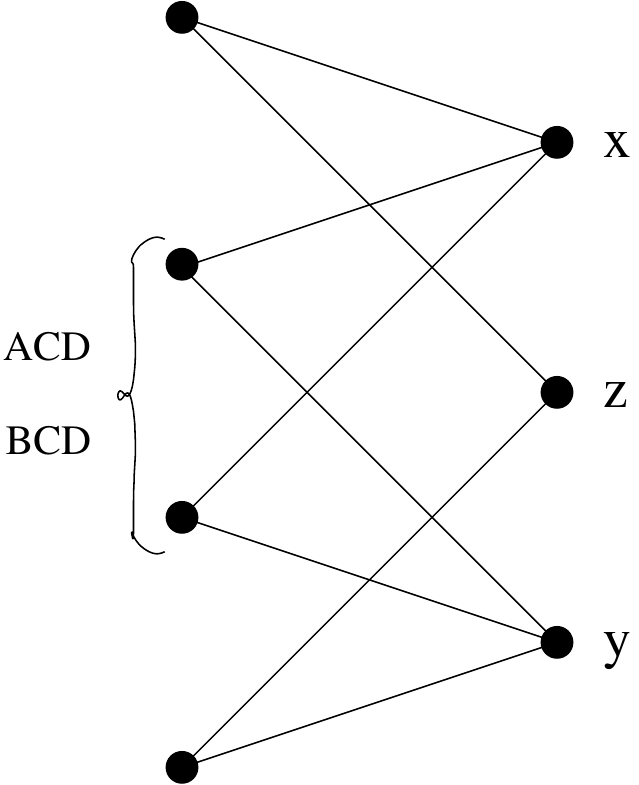}
\caption{$\{T_2,T_3\}=\{ACD,BCD\}$}\label{biparti33-ACD-BCD}
\end{minipage}\hfil
\begin{minipage}[t]{0.45\linewidth}
\centering \includegraphics[height=4.5cm]{./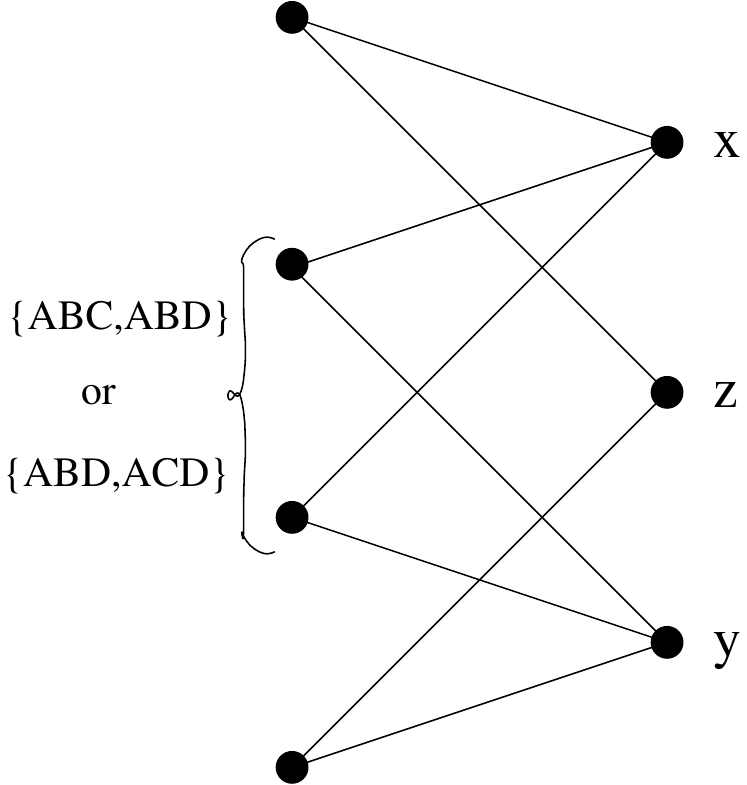}
\caption{$\{T_2,T_3\}\not=\{ACD,BCD\}$}\label{biparti33-autres}
\end{minipage}
\hfil
\end{figure}


In the first case, we have that $ACD$ and $BCD$ are conformal w.r.t. $<_x$ and $<_y$. Figure~\ref{biparti33-ACD-BCD} shows this case. 
Hence, $ABC$ is conformal w.r.t. $<_z$ and a second ordering $<_a$ for $\{a,b\}=\{x,y\}$. 
So the configuration induced by $\C$ on $(ABC,\{z,a\})$ is non-fixed, by Theorem \ref{non-fixe2D}.
Assume $b=x$.
Since $D$ is extreme in $<_x$, then $\C$ is non-fixed by Lemma~\ref{non-fixe}. 
Now, assume that $b=y$. 
Since $D$ is extremal in $<_x$, and $ACD$ is conformal w.r.t. $<_x$ and $<_y$, then $D$ is also extreme in $<_y$ restricted to $ACD$. 
Similarly, with the triplet BCD, we get that $D$ is extreme in the ordering $<_y$ restricted to $BCD$ and hence extreme in $<_y$.
So $\C$ is non-fixed by Lemma~\ref{non-fixe}.


In the second case,
the pair of triplets conformal w.r.t. $<_x$ and $<_y$ is either $\{ABC,ABD\}$ or $\{ABD,ACD\}$. 
 Figure~\ref{biparti33-autres} illustrates this case.
 So $BCD$ is conformal w.r.t. $<_z$ and a second ordering $<_a$, for $\{a,b\}=\{x,y\}$. 
 The configuration induced by $\C$ on $(BCD,\{z,a\})$ is non-fixed, by Theorem \ref{non-fixe2D}.
Assume $b=x$. 
Since $A$ is extreme in $<_x$, $\C$ is non-fixed by Lemma~\ref{non-fixe}. 
Now, assume $b=y$.
Similarly to our previous demonstration, since $A$ is extreme in $<_x$, we get that $A$ is extreme in $<_y$ when  $\{ABC,ABD\}$ are both conformal w.r.t. $<_x$ and $<_y$, as well as  when $\{ABD,ACD\}$ are both conformal w.r.t. $<_x$ and $<_y$. So $\C$ is non-fixed by Lemma~\ref{non-fixe}.

\end{itemize}

\medskip

\noindent{\it Case 4: zero or one triplets of $\E$ are conformal w.r.t. $<_x$ and $<_y$.}
We will prove that this case is not possible.
Since there is no triplet of $\E$ such that all the 
configurations induced by $\C$ on this triplet are fixed, 
each triplet is conformal w.r.t.  at least two orderings of $\C$. 
There are three pairs of orderings and four triplets, so there is at least one pair of orderings such that two triplets are conformal w.r.t. this pair of orderings.
\end{proof}

\subsection{Final proofs}

\begin{proof}[Proof of Theorem \ref{caract}]
Recall  that b) implies a) is given by Observation \ref{obs:alg-fixed}.
We have   c) implies b) as noted below Conjecture \ref{conjecture2}: if a configuration induced on a triplet is fixed then it is formally fixed  by Theorem \ref{fixe2D}; as a consequence, formal fixity by expansion for $n=4$ implies formal fixity for $n=4$.
We have d) implies c) by Proposition \ref{prop:proj-fixes}.
Lastly, to prove that a) implies d), assume d) is false. 
If $\C$ is equivalent to a configuration whose orderings satisfy
\begin{displaymath}
\begin{array}{c c c c c}
B &<_x& C &<_x& A \\
C &<_y& A &<_y& B \\
A &<_z& B &<_z& C
\end{array}
\end{displaymath}
then it is non-fixed by
Propositions~\ref{prop:proj-fixes}. 
If $\C$ is not  equivalent to such a configuration, then, by Proposition \ref{tripletfixe} and a permutation of $\E$, we have that: for every triplet $\E'$ of $\E$ there exists $b\in\B$ such that the configuration induced by $\C$ on $(\E',\B\setminus\{b\})$ is non-fixed.
Then $\C$ is non-fixed by Proposition \ref{proj_non-fixe=>non-fixe}. 
So we find that a) is false, meaning that a) implies d).
\end{proof}

\begin{proof}[Proof of Theorem \ref{caract-non-fixed}]
This proof follows from the proofs of Propositions~\ref{prop:proj-fixes} and \ref{proj_non-fixe=>non-fixe} which enumerate, by means of several cases, all possible non-fixed configurations up to equivalence.
In every case, the fact that a configuration is non-fixed is proved using Lemma \ref{non-fixe}, up to equivalence. 
Hence, every non-fixed configuration is equivalent to a configuration satisfying the hypothesis of this lemma.
Since permutations of $\B$ or $\E$ and ordering reversions obviously do not change this property, we get that every non-fixed configuration satisfies the hypothesis of this lemma.
\end{proof}


\bibliographystyle{plain}


\end{document}